\documentclass[compsoc,conference,a4paper,10pt,times]{IEEEtran}
\IEEEoverridecommandlockouts

\usepackage{cite}
\usepackage{amsmath,amssymb,amsfonts,amsthm}
\usepackage{algorithmic}
\usepackage{mathtools}
\usepackage{graphicx}
\usepackage{textcomp}
\usepackage{bmpsize}
\usepackage{xcolor}
\usepackage{lipsum}
\usepackage[colorlinks=true,urlcolor=black]{hyperref}
\def\BibTeX{{\rm B\kern-.05em{\sc i\kern-.025em b}\kern-.08em
    T\kern-.1667em\lower.7ex\hbox{E}\kern-.125emX}}

\input bksymbol2.sty

\newcommand{\KURZ}[2]{#1}
 
\newcommand{\FormelKurz}[2]{#1}
\newcommand{\ALT}[1]{}
\newcommand{\FNS}[1]{}

\newcommand{\Rblue}[1]{\textcolor{black}{#1}}

\newcommand{\Rteal}[1]{\textcolor{black}{#1}}

\newcommand{\N}{\mathbb{N}}

\newcommand{\E}{\mathbb{E}}

\newcommand{\T}{\mathcal{T}}

\newcommand{\fra}{\mathfrak{a}}
\newcommand{\frC}{\mathfrak{C}}

\newcommand{\Max}[1]{\underset{#1}{\textrm{max}}\  }

\newcommand{\DX}[1]{\ \mathbf{d} #1}

\newcommand{\MSPP}{{\tt SPC}}

\newcommand{\DRP}{random partition}%{\tt RP}}
\newcommand{\SAMP}{{\tt SAMP}}

\renewcommand{\ee}{\varepsilon}

\newcommand{\mujw}{\mu_{j,w}}
\newcommand{\mujv}{\mu_{j,v}}
\newcommand{\Djw}{\caD_{j,w}}
\newcommand{\Djv}{\caD_{j,v}}
\newcommand{\omuj}{\ovl{\mu_j}}

\newcommand{\hmu}{\hat{\mu}}

\newcommand{\hDD}{\hat{\DD}}

\newcommand{\Cjk}{\frC_{j \mapsto k}}

\newcommand{\Tjk}{\T_{j \mapsto k}}

\newcommand{\Tns}{\T[n,s]}

\newtheorem{theorem}{Theorem}[section]

\newtheorem{lemma}{Lemma}[section]
\newtheorem{definition}{Definition}[section]
\newtheorem{example}{Example}

\begin{document}

\title{Parallel Composition for Statistical Privacy
\thanks{ This research has been conducted within the AnoMed project (https://anomed.de/)
   funded by the BMFTR  (Bundesministerium für Forschung, Technologie und Raumfahrt)
  and the European Union in the NextGenerationEU action.}
}
\author{\IEEEauthorblockN{Dennis Breutigam, R\"udiger Reischuk \\}
\IEEEauthorblockA{\textit{Institute for Theoretical Computer Science, 
Universit\"at zu L\"ubeck, Germany}\\
d.breutigam@uni-luebeck.de, ruediger.reischuk@uni-luebeck.de}
% \and
% \IEEEauthorblockN{R\"udiger Reischuk}
% \IEEEauthorblockA{\textit{Institute for Theoretical Computer Science} \\
% \textit{Universit\"at zu L\"ubeck}\\
% L\"ubeck, Germany \\
% ruediger.reischuk@uni-luebeck.de}
}

\maketitle

\begin{abstract}
    Differential Privacy (DP) considers a scenario in which an adversary has almost
    complete information about the entries of a database. This worst-case assumption
    is likely to overestimate the privacy threat faced by an individual in practice.
    In contrast, Statistical Privacy (SP), as well as related notions such as
    noiseless privacy or limited background knowledge privacy, describe a
    setting in which the adversary knows the distribution of the database entries,
    but not their exact realizations. In this case, privacy analysis must account
    for the interaction between uncertainty induced by the entropy of the underlying
    distributions and privacy mechanisms that distort query answers, which can be
    highly non-trivial.

    This paper investigates this problem for multiple queries (composition). A
    privacy mechanism is proposed that is based on subsampling and randomly
    partitioning the database to bound the dependency among queries. This way
    for the first time, to the best of our knowledge, upper privacy bounds
    against limited adversaries are obtained without any further restriction on
    the database. 
    
    These bounds show that in realistic application scenarios taking
the entropy of distributions into account yields improvements of privacy and precision guarantees. 
We illustrate examples where for fixed privacy parameters and utility loss
SP allows significantly more queries than DP.

 \end{abstract}

\begin{IEEEkeywords}
privacy, accuracy, sampling, composition
\end{IEEEkeywords}

\section{Introduction}
Privacy preserving guarantees based on the concept of  differential privacy
often overestimate the knowledge of an adversary. Such an assumption requires
large  distortion of the answers to queries and hence leads to a significant
utility loss \cite{JD22,D21}. The situation that the adversary knows the entire
database except  a single critical entry is much too pessimistic in many real
world scenarios. In practice, the adversary's prior knowledge - also known as
background knowledge - is typically limited. Notions like noiseless privacy,
(inference-based) distributional differential privacy, and passive partial
knowledge differential privacy first introduced by Bhaskar et. al. and later
improved by different authors \cite{BBG11,BGKS13,DMK20} attempt to capture more realistic
scenarios.

In this paper we analyze the situation where only the distribution of the
database entries is publicly known, but not their exact values nor any further
information, called \tit{statistical privacy} (SP) in \cite{BR25}. This setting,
when adding noise to increase privacy further, is difficult to analyze due to
the complexity of the convolutions of the different distributions involved. 

To understand the privacy leakage of a single query  is not sufficient 
since often databases are subject to a bunch of queries --
in the privacy literature called composition.
Now, because of possible dependencies things get significantly more complicated,
in particular if the entries are not identically distributed.
In the DP setting when using independent noise for the distortion of each answer
the analysis remains manageable -- the leakages adds which was first examined by
Dwork et. al. \
\cite{Dwo+06} and  later improved by Kairuz et. al., Abadi et.
al. and Doroshenko et. al. \cite{KOV15,ACG16, DGKKM22 }.
For SP, however, each additional answer may reveal information about the
actual values of the noncritical entries, thus lowers the uncertainty of an
adversary. This holds even more if the correct answer is given every time, in
other words no additional entropy is introduced as it is in the DP setting.

Furthermore, if an adversary may generate queries adaptively the situation becomes
very complex.
We apply the subsampling technique provided in \cite{BR25_2} to the composition of queries based on a
random partition of the entries and prove upper bounds for privacy and utility loss.

\Rblue{
Our main results are the following:
\begin{itemize}
    \item the design of a framework for formally analyzing partition
    techniques with respect to privacy estimation,
    \item 
     general composition bounds against adversaries with limited background knowledge 
    without any requirements like injecting new entropy before each query, 
    %(section \ref{section:composition}),
    \item  
    construction of application scenarios showing that the entropy of database distributions
    as taking into account by SP can allow more composition queries than in the DP setting.
  % (section \ref{section:applications}).
%    \item techniques for incorporating structural knowledge that an adversary
%    may possess about a data source.
\end{itemize}
}

This paper is organized as follows. In the next section the problem setting is
defined formally. In section \ref{section:partitioning} we develop the framework
for partitioning mechanisms. Furthermore, extensions to more background
knowledge are considered.

Section~\ref{section:composition} investigates composition of queries. 
We conclude with a short resume and further research questions for statistical privacy.
An appendix provides more details and an introduction to the basics 
of stochastic process theory used in the proofs.\\

\subsection{Related Results}
Distribution-based privacy analyses were introduced by Bhaskar et al. in
\cite{BBG11}. However for composition, they made the restriction that the database changes
substantially between consecutive queries to generate new uncertainty --
comparable to the DP case where each answer is enriched with new independent
noise. This paper does not make any restrictions -- the database distribution
does not change for the whole sequence of queries. 

Distribution-based privacy was extended by Bassily et. al. in \cite{BGKS13} to align with the modern
indistinguishability-based formulation of DP, and Desfontaines et. al.
\cite{DMK20} further generalized the framework by introducing adversaries with
varying types of background knowledge as well as methods for accounting for
dependencies between queries. In \cite{BR25_2} we have applied subsampling
techniques to this setting.  

% \Rblue{For classical DP, privacy amplification through subsampling was first studied by
% Bassily et. al.\ in \cite{BST14} and was later refined by Balle et. al.\cite{BBG18}, who provided a detailed analytical understanding of how
% sampling interacts with noise-adding mechanisms. Building on this foundation, a
% series of subsequent works by Balle et. al. \ \cite{WBK19, BBG20, IC21} employed sampling to
% amplify the privacy of composed queries. The use of database
% partitioning--extended to distributed settings via random allocation--as a
% privacy--enhancing mechanism for composition was introduced by Bittau et. al.\ in
% \cite{BEM17} and was further developed in \cite{BKM20, FS25}.}

% \Rteal{While there are many articles on composed queries in DP, see among others}
% \cite{ACG16, M17, WBK19, DRS22}
% \Rteal{, the analysis of the case where the adversaries knowledge about
% the database contents are limited.  In }
% \cite{BBG11} 
% \Rteal{ bounds are established
% for databases that change significantly before the next query is asked.  In }
% \cite{DMK20}\Rteal{ a framework is introduced to take the dependencies between
% different queries into account. 
% In this paper we consider the case
% where we have multiple queries and a database of limited size and what to
% maximize the privacy gained under the assumption that an attacker only knows the
% distribution from which the database is drawn.
% Relevante Arbeiten: }
% \cite{FS25}

\section{Databases, Queries and Privacy}\label{section:statistical_privacy}
A database \wfinw{$D$} of size $n$ is a sequence of $n$ independent entries
$I_1, \ldots, I_n$, where each entry $I_j$ takes values in a common space $W$.
The random variable $I_j$ is distributed according to a marginal distribution
$\mu_j$ on $W$. These marginal distributions need not be identical; however,
their supports are assumed to coincide and are therefore equal to $W$.

The distributions $\omuj$ jointly induce a distribution (respectively, a density
function) $\mu$ on $W^n$, and we write $\caD \sim \mu$ for the corresponding
random variable on $W^n$.\footnote{
In the following, we do not distinguish between a distribution $\mu$ and a random
variable $\caD$ with distribution $\mu$, and use both terms
interchangeably when the meaning is clear from the context.}

For $w \in W$, we denote by $\mujw$ and $\Djw$ the conditional
distribution and the corresponding conditional random variable obtained by
fixing the entry $I_j$ to the value $w$.

Let $\caF$ be a set of queries that may be posed to a given database.
Formally, a query is modeled as a measurable function
\[
F : W^n \to A,
\]
where $A$ denotes the set of possible answers.
We assume that both $W$ and $A$ are totally ordered.

A query $F$ is called \emph{monotone} if
\[
D \le \hat{D} \;\Longrightarrow\; F(D) \le F(\hat{D}),
\]
where the partial order on $W^n$ is defined componentwise, that is,
$D = (I_1,\ldots,I_n) \le \hat{D} = (\hat{I}_1,\ldots,\hat{I}_n)$ if
$I_j \le \hat{I}_j$ for all $j \in \{1,\ldots,n\}$.

A privacy mechanism $\caM$ is a function that maps database queries to
random variables on $A$. That is, for a given database $D$ and a query $F$,
$\caM(F,D)$ is a random variable taking values in $A$ that perturbs the
true answer $F(D)$. If the database itself is modeled as a random variable
$\caD$, then $\caM(F,\caD)$ involves two sources of
randomness: the intrinsic uncertainty of the database $\caD$ and the
external noise introduced by the mechanism $\caM$.

To measure the privacy loss induced by a query $F$, one focuses on a
\emph{critical entry} $j$ and assesses how much information about its value
$I_j$ can be inferred from the query answer. Rather than attempting full
reconstruction of $I_j$, privacy analyses typically consider the simpler task of
distinguishing between two neighboring databases that differ only in whether
$I_j = v$ or $I_j = w$ for $v,w \in W$, while all other entries remain fixed.

To protect privacy one should not allow queries that treat entries differently
-- in the extreme case depend only on a small subset of entries. Then obviously
it is quite easy to deduce information about these few entries. Sampling only
makes sense if the random order in which elements are drawn does not matter.
For these reasons we restrict the analysis to symmetric functions $F$.
Furthermore, applying  sampling techniques  to determine, resp.~approximate a property of 
the whole database makes only sense for queries and distributions
for which the variance of the answer for a random sample is bounded and from this answer
one can deduce the answer for the whole database.
We will not further elaborate on these statistical issues here,
but note that for most queries that have been considered so far in the privacy literature
-- most notably  summation type queries -- and \emph{well behaved} distributions
subsampling can be applied.

Now the question arises how to measure a potential leakage of private
information when the answer to a database query is released. Differential
privacy assumes that an adversary has complete knowledge about the values of all
other noncritical entries which seems to be the worst situation one can think of
\cite{Dwo06}.

\begin{definition}[Differential Privacy (DP)]
For  a collection   $\caW$ of databases a
privacy mechanism $M$ achieves $(\ee,\dd)$-differential privacy for a query $F$
if for all neighboring databases $D,D'$ in $\caW$ and  for all $S \subseteq A$ 
it holds
\[  \Prob{M(F,D) \in S}{} \ \kla  e^\ee \ \Prob{M(F,D') \in S}{}   + \delta  \ .   \hspace{2ex} \wkast \]

\end{definition}

In this setting the distribution of the noncritical entries is trivial -- fixed to unique values.
Analytically such a situation is easier to handle,
but this assumption about background knowledge
is far too strong in many practical situations.
If  good privacy protection  in such a
worst case scenario can be achieved one is on the safe side.
However, this typically requires $M$ to induce large distortion on the
correct answer for $F$, thus the precision of such an answer is low.

Better utility can be achieved for more realistic
scenarios with less background knowledge of an adversary.
Such a setting has been called \emph{noiseless privacy} \cite{BBG11} or
\emph{differential privacy with partial knowledge}  \cite{DMK20}.
We concentrate on the uncertainty generated by  database distributions
combined with privacy enhancing mechanisms like distortion by noise or subsampling 
and use the term \emph{statistical privacy} \cite{BR25}.
Now one has to determine the convolution of distributions 
for which no simple formulae are known in general.

\begin{definition}[Statistical Privacy (SP)]
A privacy mechanism $M$ achieves  $(\ee,\dd)$-statistical privacy for a critical entry $j$ 
%at $v,w  \in W$
with respect to a distribution $\mu \sim \caD$ 
(or a collection of distributions) and a query $F$ 
if (for all $\mu$) for every subset $S \sse A$  and all pairs $v,w \in W$ it  holds:
% for $\caD \sim \mu$:
\[  \Prob{M(F, \caD_{j,v}) \in S}{} \ \kla  
    e^\ee \ \Prob{M(F,\caD_{j,w}) \in S}{}   + \delta  \ .   \]
It achieves  $(\ee,\dd)$-statistical privacy if this condition holds for every
entry $j$.  \wkast

% \Rred{
% Given $\ee$ let  $\Phi^j_{\mu,F,M}(\ee)$  denote the smallest $\dd$ such that $M$ 
% for $F$ and $\mu$ achieves $(\ee,\dd)$-statistical privacy for $j$.
% }
% In case that no operator $M$ is used 
% %(noiseless privacy -- $M$ is the identity), 
% we simply write $\Phi_{\mu,F}$ --
% thus privacy is only generated by the distribution $\mu$.
\end{definition}

\Rteal{
In the following analysis we will consider only a single $j$. 
The bounds obtained can trivially be extended to all entries by taking only pairs
$(\ee,\dd)$ that hold for all $j$.  Note that for identically distributed entries the
conditional distributions $\caD_{j,v}$ are identical as well.
}

Given a (random) function $G: \; X \to A$, by $\caK_G$ we denote the Markov
kernel corresponding to $G$ that maps a distribution  $\mu$ on $X$ to the
distribution $\mu \; \caK_G$ on $A$ that is generated by $G$.

\begin{definition}%{\rm  { } \\
For $G: \; X \to A$ and $\mu,\hmu$  distributions on $X$, the \emph{privacy
curve} $\DD^G_{\mu, \hmu}$ is defined for $\ee \ge 0$  as follows  \cite{BBG18}:
\FormelKurz{ \begin{IEEEeqnarray*}{rCl}
\DD^G_{\mu, \hmu}(\ee) 
  &:=&  \int_A \max   \lk 0, \; \mu   \caK_G (a) - e^{\ee} \; \hmu \caK_G (a) \rk \DX{a} \\ 
   &\gla&   \max_{S \sse A} \  \int_S   \mu   \caK_G (a) - e^{\ee} \; \hmu
   \caK_G (a)  \DX{a}    \ . \hspace*{2ex} \wkast
 \end{IEEEeqnarray*}}{ \begin{IEEEeqnarray*}{rCl}
\DD^G_{\mu, \hmu}(\ee) 
  &:=&  \int_A \max   \lk 0, \; \mu   \caK_G (a) - e^{\ee} \; \hmu \caK_G (a) \rk \DX{a}             
   \gla   \max_{S \sse A} \  \int_S   \mu   \caK_G (a) - e^{\ee} \; \hmu \caK_G (a)  \DX{a}    \ .           \wkast
 \end{IEEEeqnarray*}}
\end{definition}

In the following we will only analyze the effect of subsampling opperators
\SAMP{} applied to a query $F$ to improve privacy. Besides subsampling other
privacy mechanisms $M$ may be used in addition. 
Our results will be independent of the specific type of $M$. To
simplify the notation we omit the additional term $M$ and use $F$ as a
combination of the query and an arbitrary $M$. Let 
\[\hDD^{F \circ \SAMP}_{\caD,j}(\ee) \dea \max_{v,w} \DD^{F \circ \SAMP}_{\caD_{j,v},
\caD_{j,w}}(\ee) .\]
Then a sampling operator $\SAMP$ achieves $(\ee,\dd)$-statistical privacy for $\caD$, $F$ and $j$ iff $\dd \ge
\hat{\DD}^{F \circ \SAMP}_{\caD,j}(\ee)$. 

\section{Partitions and Templates}\label{section:partitioning}
To specify a random selection of entries term template was defined in
\cite{BR25_2} this notion can be expanded to allow for drawing multiple samples
making a partition of the database. To now ask multiple queries on these smaller
databases we consider for a function $F$ with $n$ arguments a family of
functions $\caF = (F_s(y_1, \dots, y_s))_{s \in \N}$ that approximate $F$.Thus
in the following when we talk about a query $F$ we actually mean the family
$\caF$ and depending of the size of the sample the appropriate member. If a
distribution is not smooth, but there are very few entries that determine the
extrema it is unlikely that such an element is drawn and therefore one cannot
hope for an acceptable approximation. To analyze a composition of $m$ queries we specify $m$
subsamples simultaneously using the following definition.

\begin{definition}[Templates]\label{definition:template}
    Given a sequence $X=x_1,\ldots,x_n$ of fixed length $n$, a set of $m$
    samples $Y_k=y_{k,1},\ldots,y_{k,n_k}$ of $X$ for $k \in [1:m]$ can be
    described by $m$ sequences $\tau=\tau_1, \dots, \tau_m$ of indices such that
    $y_{l,k} = x_{\tau_k(l)}$. Such a $\tau$ is called a  \emph{template}
    and $\nu_\tau = (|\tau_1|,\dots, |\tau_m|)$  the template format. In the
    following, given a template $\tau$ let $n_k =|\tau_k|$. An \emph{injective
    template} is a template $\tau$ in which each index $i \in [1 \ldots n]$
    occurs at most once.

    Let $\frC{}$ denote the set of all  templates for a sequence of length $n$,
    $\frC{}(\nu)$ those of format $\nu$ for $\nu \in   \N^m$ and $\frC{}^k$  the set
    of all  subtemplates $\tau_k$ for the $k$-th sample. The subset of   $\frC{}$
    where the  $j$-th element of $X$ appears in the $k$-th sample is denoted by
    $\Cjk \dea \left\{ \tau \in \frC{} \mid j \in \tau_k \right\}$.    
    If only injective templates are considered we use the notation $\hat{\frC}$,
    $\hat{\frC}(\nu)$ etc.
    
    Given a template $\tau$, the operator $\SAMP_\tau: W^n \to W^{n_1} \times
    \dots \times W^{n_m}$ maps a database $\caD$ of size $n$ to $m$ databases
    $\caD_1, \ldots,\caD_m$   of  sizes $n_1,\dots, n_m$ and can be described
    by a Markov kernel denoted by $\caK_\tau$.
\wkast
\end{definition}

To define drawing a suitable partition we first consider the concept of a
sampling technique and Sampling Privacy Curve first introduced by \cite{BR25_2}.

\begin{definition}[Sampling Technique]
\label{definition:subsampling_technique}
    For fixed length $n$, a specific sampling technique is specified by a
    random variable $\caT \sim \mu_\caT$ on $\frC$.

    Combining \SAMP{} with a sampling technique $\caT$ we get an operator $\SAMP_\caT$ from $W^n$ to $W^\star$ with kernel
    \[ \caK_{\caT} \gla \int_{\frC} \caK_\tau  \DX{\mu_\caT(\tau)} \ . \]
\end{definition}

\begin{lemma}
  For a a database distribution $\mu$, a query   $F$, a sampling
  technique $\caT$ and $S \subseteq A$ it holds
  \[  \mu \; \caK_\T \; \caK_F (S) \gla  \sum_{\tau} \; \mu_\caT (\tau) \; \mu
  \; \caK_\tau \; \caK_F (S) \ .   \]
\end{lemma}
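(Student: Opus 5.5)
The identity follows directly from the definition of $\caK_\caT$ as a mixture of the kernels $\caK_\tau$, combined with associativity of Markov kernel composition and linearity of integration. We read $\caK_\T$ in the statement as the kernel $\caK_\caT$ of Definition~\ref{definition:subsampling_technique}. Since $n$ is fixed and the paper writes the mixture as a sum over $\tau$, we treat the relevant set of templates as countable (finite for injective templates of bounded length), so that $\int_{\frC}\cdots\DX{\mu_\caT(\tau)}$ is the sum $\sum_\tau \mu_\caT(\tau)\cdots$.

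First, fix a database $x\in W^n$ and a measurable set $B$ of sample tuples. By definition,
\[ \caK_\caT(x,B) \gla \sum_\tau \mu_\caT(\tau)\, \caK_\tau(x,B). \]
Next, compose with $\caK_F$. For $S\subseteq A$,
\[ (\caK_\caT\caK_F)(x,S) \gla \int \caK_F(y,S)\, \caK_\caT(x,\DX{y}) \gla \sum_\tau \mu_\caT(\tau)\int \caK_F(y,S)\,\caK_\tau(x,\DX{y}). \]
The second equality holds because integrating a bounded nonnegative function against a countable nonnegative mixture of measures equals the mixture of the integrals. This is exactly $\sum_\tau \mu_\caT(\tau)(\caK_\tau\caK_F)(x,S)$. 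Here $\caK_F(y,\cdot)$ is understood to use the family member $F_s$ matching the sample sizes, which is determined by $y$.

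Finally, integrate against $\mu$:
\[ \mu\,\caK_\caT\,\caK_F(S) \gla \int_{W^n} \sum_\tau \mu_\caT(\tau)(\caK_\tau\caK_F)(x,S)\,\DX{\mu(x)}. \]
Tonelli's theorem lets us exchange sum and integral, since all terms are nonnegative and bounded by $1$. This gives $\sum_\tau \mu_\caT(\tau)\,\mu\,\caK_\tau\,\caK_F(S)$. Composition of Markov kernels is associative, so it does not matter whether we read the expression as $(\mu\caK_\caT)\caK_F$ or $\mu(\caK_\caT\caK_F)$.

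The only genuine obstacle is measure-theoretic bookkeeping. Each composition must be well defined: $\tau\mapsto\caK_\tau(x,B)$ must be measurable, and the composed kernels must act on the disjoint union $W^\star$ of sample spaces of different formats. If $\frC$ were uncountable, the sums would be replaced by integrals and we would use Fubini--Tonelli for the jointly measurable map $(\tau,x)\mapsto(\caK_\tau\caK_F)(x,S)$. With countably many templates, this measurability is automatic.
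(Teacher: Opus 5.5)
Your proof is correct and takes the same approach as the paper. The paper states this lemma without proof, treating it as immediate from the definition $\caK_\caT = \int_{\frC}\caK_\tau \DX{\mu_\caT(\tau)}$; your argument unfolds that mixture, passes the nonnegative sum through the composition with $\caK_F$ and the integral against $\mu$ by Tonelli, and uses associativity of kernel composition, which is exactly the routine verification behind it (and since $\frC$ consists of finite tuples of finite index sequences over $[1:n]$, it is always countable, so your caveat for the uncountable case is never needed).
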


\begin{definition}[Sampling Privacy Curve (\MSPP)] \label{definition:mspp}
    For a database $\caD$, a query $F$ and a sampling technique 
    $\caT$ the sampling privacy curve of the $j$-the entry is  defined by
\begin{IEEEeqnarray*}{rCl}
  \MSPP^{F,\T}_{\caD,j}(\ee) 
%  \Rred{\qq \MSPP_{F,\caD,\T}^j(\ee)}
   &\dea & \E_{\tau \sim \T}
     \left[ \hDD^{F \circ \SAMP_\tau}_{\caD,j}(\ee) \right] 
        \ . \hspace*{2ex} \wkast
\end{IEEEeqnarray*}
\end{definition}

With respect to privacy, to generate large entropy by splitting a database into subdatabases
a good strategy is to choose a template at random to define a partition of the entries.

\begin{definition}[Random Partition]\label{definition:drp}
    Given a subset $\tilde{\frC}$ of $\frC$, a  $\tilde{\frC}$-random partition
    of a database $\caD$ is modeled by a  uniformly distributed random variable
    $\caR \sim \mu_\caR$ on $\tilde{\frC}$. Let $\caR^k$ denote the projection
    of $\caR$ onto the $k$-th sample and $\caR_{j,k} := \caT \mid \frC{}_{j,k}$
    the restriction that the $j$-th index belongs to the $k$-th sample.

    Combining  $\caR$ with the $\SAMP$ operator we get an operator $\SAMP_\caR$
    from $W^n$ to $W^{n_1} \times \dots \times W^{n_m}$ with kernel \(
    \caK_{\caR} \gla \int_{\tilde{\frC}} \caK_\tau  \DX{\mu_\caR(\tau)} \ . \)
    \wkast
\end{definition}

\section{Handling Background Knowledge}
The sample template framework can handle also scenarios where the adversary
possesses more background knowledge than the distribution. Now the entropy of
the samples might be quiet different. For example, he may know the exact values
of a subset of entries or structural information or that the entries fall into
different groups and the exact sizes of each group is given to him. This can be
modeled by representing known entries as point distributions or entry
distributions using different parameters. Now one faces a hypergeometric,
multivariate hypergeometric distribution or multinomial distribution generated
by the sampling process. 

\begin{example}\label{example:SPC_Bound_hypergeometric_dist_PQ}
    Let $\caD$ be a database of size $n$, where the adversary knows $v$ entries
    and the remaining entries are iid distributed. Let $F$ be a property query
    and let $\caT_{n,s}$ denote sampling without replacement producing a sample
    of size $s$. Furthermore, let $\caD^z_s$ be a database of size $s$ in which
    $z$ entries are known. The specific values of the known entries do not
    affect privacy -- since these values influence the distribution of
    $F(\cdot)$ independently of the sensitive entry. Only the number of known
    entries matters. Consequently, the sampling privacy curve can be computed as
    follows:
    \begin{IEEEeqnarray*}{rCl}
        \MSPP_{\mu, j}^{F, \Tns}(\ee)  &=& \sum^{s-1}_{z=0}
        \frac{\binom{v}{z} \binom{n-v}{s-1-z}}{\binom{n}{s-1}} \hat{\DD}^F_{\caD^z_s,j}(\ee).
    \end{IEEEeqnarray*}
    Furthermore, in cases where the privacy curve for a fixed $\ee$ drops
    rapidly at first and then flattens out -- as is the case for property
    queries (see Fig.~\ref{figure:PQ_Delta_Parameter_DB_Size}) -- the
    computational costs can be significantly reduced. Choosing a suitable
    threshold $\phi$ for $z$ one could replace the
    $\hat{\DD}^F_{\caD^z_s,j}(\ee)$ values by $1$ larger for $z > \phi$ and by
    $\hat{\DD}^F_{\caD^\phi_s,j}(\ee)$ for smaller $z$ because of monotonicity.
    This means for $0 \leq z < s-1$ 
    \begin{IEEEeqnarray*}{l}
        \MSPP_{\mu, j}^{F, \Tns}(\ee) \leq  \\
        (1-\textrm{cdf}_{\mathcal{H}(n,v,s-1)}(z))
        + \textrm{cdf}_{\mathcal{H}(n,v,s-1)}(z) \hat{\DD}^F_{\caD^z_s,j}(\ee) \ ,
    \end{IEEEeqnarray*}
    where $\mathrm{cdf}_{\mathcal{H}(n,v,s-1)}$ denote the cumulative distribution
    function of the hypergeometric distribution with parameters $n, v, s-1$. 
\end{example}

\section{Composition}\label{section:composition}

A database is often subject to several queries, called composition in the
privacy literature. Then the original uncertainty of an adversary will decrease
by every answer. To prevent this Bhaskar et. al.~have considered a situation
where the database changes significantly between each query by modifying or
adding a certain percentage of new data \cite{BBG11}, thus new entropy is
generated. But such a situation is not typical for many applications. For a
fixed database one could apply sampling to answer each query only with parts of
the database to limit the dependencies of the answers. Sampling reduces the
precision of answers which can be well estimated by statistical methods, but the
lower dependency of the answers might requires less distortion for each answer.
We will show that a good balance between this precision loss and the lower
utility loss by distortion achieves good privacy and good overall utility
simultaneously. 
%\Rred{Consider the Article }\cite{S10}

The lowest dependency between answers is obtained by using disjoint subdatabases
for every query. Furthermore, an adversary should have no information which
subdatabase contains the critical element. Thus the database curator chooses a
random assignment of the entries.

\Rteal{In the following we will use only injective templates. Then given a template
format $\nu$,   each entry has the same probability $n_k/n$ to end up in the
$k$-th subdatabase used for the $k$-th query. }
%curve in \cite{BR25_2}.

%\Rred{Composition Sampling verschoben zum allgemeinen Sampling Teil}

\subsection{Nonadaptive Composition}

Nonadaptive composition means that every query does not depend on answers to
other queries, thus they can be asked in parallel.
$m$ such queries can be modeled as a single query $F= (F_1, \dots, F_m):\ W^n \mapsto A^m$.
% with $F = (F_1, \dots, F_m)$  where $F_k: W^{n} \mapsto A$. 
If the curator computes the answers using a partition of the database
the format of the partition has to be fixed
in advance as well, let it be $\nu=(n_1,\ldots,n_m)$.
%By using a partition splitting the database in subsets of sizes $n_1, \dots, n_m$ 
%and using these to answer the partial queries $F_k$ they are
%approximated as a query on a database of size $n_k$ as discussed in the
%beginning of this section. Further in that case we understood the composed query
%as $F:W^{n_1} \times \dots \times W^{n_m} \mapsto A^m$ such that for $d = (d_1,
%\dots, d_m) \in W^{n_1} \times \dots \times W^{n_m}$ we have that $F(d) =
%(F_1(d_1), \dots, F_m(d_m))$. Under the assumption, that by some process the
%subset sizes are chosen and known by the curator and the adversary the following
%holds for the privacy of such a composition of queries.

When using a partition of the database to answer a sequence of queries an
adversary could try to estimate from the  answers which entries are used for
each query. 
If the entries are identically distributed this will not provide any information.
However, for non-identical distributions this could give some advantage.
%To still derive bounds for the privacy of a composition of queries
%these dependencies need to be taken into account. 
In this case the answer are no longer independent which complicates the analysis
significantly.
The sensitive entry can be member of any of the $m$ subsets that are used to 
answer the queries.

\begin{lemma}\label{lemma:nonadaptive_composition:pure_spc}
Let $\mu$ be a distribution on databases $\caD$ of size $n$, 
$F =(F_1, \dots, F_m)$ a nonadaptive composition of
$m$ queries \ and $\T$  denote the injective $(n_1, \dots, n_m)$--\DRP\ of the entries.
Then $F \circ SAMP_{\T}$  achieves $(\ee, \dd)$--statistical privacy for $\mu$ 
and critical element $j$ with
\[ \dd \gla   \sum_{k=1}^{m} \frac{n_k}{n} \; \
    \MSPP^{F_k, \caT_{j,k}}_{ \caD,j}(\ee)   \ . \]
\end{lemma}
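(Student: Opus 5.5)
The plan is to condition on which block of the random partition contains the critical entry $j$, and then use convexity of the privacy curve. Fix $v,w \in W$ and $\ee \ge 0$. The random partition $\T$ is uniform over the injective templates of format $\nu=(n_1,\dots,n_m)$. Since $\T$ is a partition, the events $\{\tau \in \Cjk\}$ for $k=1,\dots,m$ are disjoint and cover $\hat{\frC}(\nu)$, and each has probability $n_k/n$. The first step is to rewrite the output distribution with the lemma following Definition~\ref{definition:subsampling_technique} and group the templates according to $k$:
\[ \mu_{j,v}\,\caK_\T\,\caK_F(S) \gla \sum_{k=1}^m \frac{n_k}{n} \sum_{\tau} \mu_{\caT_{j,k}}(\tau)\; \mu_{j,v}\,\caK_\tau\,\caK_F(S). \]
The analogous identity holds for $w$. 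Here $\caT_{j,k}$ is the partition conditioned on $j \in \tau_k$.

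The second step uses the fact that $\mu \mapsto \int_A \max\{0,\mu(a)-e^\ee\hmu(a)\}\DX{a}$ is jointly convex, i.e.\ $\max\{0,\cdot\}$ is subadditive. Applying this to the mixture over $k$ and then over $\tau$ gives
\[ \DD^{F\circ\SAMP_\T}_{\caD_{j,v},\caD_{j,w}}(\ee) \kla \sum_{k}\frac{n_k}{n}\,\E_{\tau\sim\caT_{j,k}}\left[\DD^{F\circ\SAMP_\tau}_{\caD_{j,v},\caD_{j,w}}(\ee)\right]. \]
The mixing weights are the same under $v$ and $w$, because the curator draws the partition independently of the data. This is exactly what allows splitting the curve term by term.

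The third and main step reduces the curve of the composed query $F=(F_1,\dots,F_m)$ under a fixed injective $\tau$ with $j\in\tau_k$ to the curve of $F_k$ alone. Because $\tau$ is injective and the entries are independent, the subdatabases $\caD_1,\dots,\caD_m$ are independent. Only $\caD_k$ contains $I_j$, so for $l\neq k$ the law of $F_l(\caD_l)$ is the same under $v$ and $w$. The joint answer density therefore factors as $p^{v}_k(a_k)\prod_{l\ne k}q_l(a_l)$ versus $p^{w}_k(a_k)\prod_{l\ne k}q_l(a_l)$. Pulling the common product out of the $\max$ and integrating it to $1$ yields $\DD^{F_k\circ\SAMP_{\tau}}$. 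Any extra mechanism $M$ is absorbed into $F$, which is harmless provided it acts independently on each component.

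I expect this factorization to be the delicate point. It must be done per fixed $\tau$ rather than after mixing over the partition. This is where the adversary's ability to guess block memberships, for non-identical distributions, would otherwise create dependence, and the convexity step before it is what sidesteps that issue. Finally, bound each per-$\tau$ term by its maximum over $v,w$. This gives $\hDD$, and the expectation over $\caT_{j,k}$ becomes $\MSPP^{F_k,\caT_{j,k}}_{\caD,j}(\ee)$ by Definition~\ref{definition:mspp}. Taking the maximum over $v,w$ on the left then yields the claimed $\dd$.
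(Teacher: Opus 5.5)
Your proposal is correct and follows essentially the same route as the paper's proof. You split by the block $k$ containing $j$ with weights $n_k/n$, pull the mixture over $\tau \sim \caT_{j,k}$ out of the maximum by convexity (the step where, as the paper also notes, information from the other answers is discarded), factor the answer density for each fixed injective $\tau$, and integrate out the factors with $\ell \neq k$; the only difference is that you integrate the positive part $\max\{0,\cdot\}$ directly, which is slightly cleaner than the paper's Fubini step that restricts the maximizing set to product sets $S = S_1 \times \dots \times S_m$.
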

\begin{IEEEproof} 
For functions $f$ and $g$ defined on the same space $A$, we may write $f(a) -
g(a)$ as $(f - g)(a)$ for notational clarity. Let  $v,w \in W$ be arbitrarily chosen and $\fra{} = (a_1, \dots,  a_m) \in A^m$ a vector 
of answers to the $m$ queries. 
%    Remember that $\caT_{j,k} = \caT \mid j \in \caT^k$.
    \begin{IEEEeqnarray}{l}
        \DD_{F,\; \mujv \caK_\T, \; \mujw \caK_\T}(\ee)   \nonumber \\
        = \Max{S \subseteq A^m} \int_{S} \mujv \caK_{\caT} \caK_F(\fra{}) - e^\ee \mujw \caK_{\caT} \caK_F(\fra{})  \DX{\fra{}}   \nonumber  \\
%        &\gla \Max{S \subseteq A^m} \int_{S} \ \sum_{k=1}^{m} \frac{n_k}{n}  \mujv \caK_{\caT_{j,k}} \caK_F(\fra{}) - \sum_{k=1}^{m} \frac{n_k}{n}  e^\ee \mujw \caK_{\caT_{j,k}} \caK_F(\fra{})   \DX{\fra{}}  \nonumber  \\
         \leq \sum_{k=1}^{m} \frac{n_k}{n} \ \Max{S \subseteq A^m} \nonumber \\
        \qd \qd   \int_{S} \mujv \caK_{\caT_{j,k}} \caK_F(\fra{}) -  e^\ee \mujw \caK_{\caT_{j,k}} \caK_F(\fra{})   \DX{\fra{}} \; .  \label{equation:nonadaptive:spc:1}
    \end{IEEEeqnarray}
Using    
\[          \mu \caK_{\caT_{j,k}} \caK_F(\fra{}) = \sum_{\tau \in \frC^k_j} \caT_{j,k}(\tau) \; \mu \caK_{\tau} \caK_F(\fra{})     \] 
    the integral in \eqref{equation:nonadaptive:spc:1} can be split.  
%    For the answers where the sensitive entry is not used ($k \neq j$) 
 The subqueries are independent for any fixed $\tau$
    due to the disjointedness of the subsets and the independence
    of the entries . 
    Therefore it holds 
    \[ 
        \mu \caK_{\tau} \caK_{F}(\fra{}) \gla \prod_{\ell=1}^{m} \mu \caK_{\tau^\ell} \caK_{F_\ell}(a_\ell). 
    \] 
    Since the sensitive element $j$ is not contained in the $l$-th
    subset for $\ell \neq k$  we have $\mujv \caK_{\tau^\ell}
    \caK_{F_\ell}(a_\ell) \gla \mujw \caK_{\tau^\ell} \caK_{F_\ell}(a_\ell) \gla
    \mu \caK_{\tau^\ell} \caK_{F_\ell}(a_\ell)$. Thus
    \KURZ{  \begin{IEEEeqnarray*}{rCl} 
        &&\mujv \caK_{\caT_{j,k}} \caK_F(\fra{}) -  e^\ee \mujw \caK_{\caT_{j,k}} \caK_F(\fra{}) \\
        &\gla& \sum_{\tau \in \frC_{j,k}} \caT_{j,k}(\tau)  \left(\mujv \caK_{\tau} \caK_F(\fra{}) -  e^\ee \mujw \caK_{\tau} \caK_F(\fra{}) \right) \\
%        &\gla& \sum_{\tau \in \frC_{j,k}} \caT_{j,k}(\tau) \left(  \mujv \caK_{\tau^k} \caK_{F_k}(a_k) \; \prod_{\ell \neq k} \mu \caK_{\tau^\ell} \caK_{F_\ell}(a_\ell)  -  e^\ee  \mujw \caK_{\tau^k} \caK_{F_k}(a_k) \; \prod_{\ell \neq k} \mu \caK_{\tau^\ell} \caK_{F_\ell}(a_\ell) \right) \\
        &\gla& \sum_{\tau \in \frC_{j,k}} \caT_{j,k}(\tau) \left(  \mujv
        \caK_{\tau^k} \caK_{F_k}  -  e^\ee  \mujw \caK_{\tau^k}
        \caK_{F_k} \right)(a_k) \\
        &&\hspace*{7ex} \cdot \prod_{\ell \neq k} \mu \caK_{\tau^\ell} \caK_{F_\ell}(a_\ell) 
     \end{IEEEeqnarray*}}{    \begin{align*} 
        &\mujv \caK_{\caT_{j,k}} \caK_F(\fra{}) -  e^\ee \mujw \caK_{\caT_{j,k}} \caK_F(\fra{}) \gla \sum_{\tau \in \frC_{j,k}} \caT_{j,k}(\tau)  \left(\mujv \caK_{\tau} \caK_F(\fra{}) -  e^\ee \mujw \caK_{\tau} \caK_F(\fra{}) \right) \\
        &\gla \sum_{\tau \in \frC_{j,k}} \caT_{j,k}(\tau) \left(  \mujv \caK_{\tau^k} \caK_{F_k}(a_k) \; \prod_{\ell \neq k} \mu \caK_{\tau^\ell} \caK_{F_\ell}(a_\ell)  -  e^\ee  \mujw \caK_{\tau^k} \caK_{F_k}(a_k) \; \prod_{\ell \neq k} \mu \caK_{\tau^\ell} \caK_{F_\ell}(a_\ell) \right) \\
        &\gla \sum_{\tau \in \frC_{j,k}} \caT_{j,k}(\tau) \left(  \mujv \caK_{\tau^k} \caK_{F_k}(a_k)  -  e^\ee  \mujw \caK_{\tau^k} \caK_{F_k}(a_k) \right)    \; \prod_{\ell \neq k} \mu \caK_{\tau^\ell} \caK_{F_\ell}(a_\ell) 
    \end{align*}}
Inserting this into \eqref{equation:nonadaptive:spc:1} gives 
    \KURZ{
    \begin{IEEEeqnarray}{l}
        \sum_{k=1}^{m} \frac{n_k}{n} \Max{S \subseteq A^m} \int_{S} \mujv \caK_{\caT_{j,k}} \caK_F(\fra{}) -  e^\ee \mujw \caK_{\caT_{j,k}} \caK_F(\fra{}) \DX{\fra{}} \nonumber \\
        = \sum_{k=1}^{m} \frac{n_k}{n} \ \Max{S \subseteq A^m} \int_{S}
        \sum_{\tau \in \frC_{j,k}}  \nonumber \\ 
        \hspace*{7ex} \caT_{j,k}(\tau) \cdot \big(  \mujv
        \caK_{\tau^k} \caK_{F_k}  -  e^\ee  \mujw \caK_{\tau^k}
        \caK_{F_k} \big)(a_k) \nonumber \\
        \hspace*{7ex} \prod_{\ell \neq k} \mu \caK_{\tau^\ell}
        \caK_{F_\ell}(a_\ell)  \DX{\fra{}} \nonumber \\
        \leq \sum_{k=1}^{m} \frac{n_k}{n} \sum_{\tau \in \frC_{j,k}}
        \caT_{j,k}(\tau) \Max{S \subseteq A^m} \nonumber \\
        \hspace*{7ex} \int_{S}  \left(  \mujv
        \caK_{\tau^k} \caK_{F_k}  -  e^\ee  \mujw \caK_{\tau^k}
        \caK_{F_k} \right)(a_k)  \label{equation:nonadaptive:spc:2} \\
        \hspace*{7ex} \prod_{\ell \neq k} \mu \caK_{\tau^\ell}
        \caK_{F_\ell}(a_\ell)  \DX{\fra{}} \nonumber
    \end{IEEEeqnarray}
    }{\begin{align}
        &\sum_{k=1}^{m} \frac{n_k}{n}\  \Max{S \subseteq A^m} \int_{S}  \mujv \caK_{\caT_{j,k}} \caK_F(\fra{}) -  e^\ee \mujw \caK_{\caT_{j,k}} \caK_F(\fra{}) \DX{\fra{}} \nonumber \\
        &\gla  \sum_{k=1}^{m} \frac{n_k}{n} \ \Max{S \subseteq A^m} \int_{S}  \sum_{\tau \in \frC_{j,k}} \caT_{j,k}(\tau) \left(  \mujv \caK_{\tau^k} \caK_{F_k}(a_k)  -  e^\ee  \mujw \caK_{\tau^k} \caK_{F_k}(a_k) \right)    \; \prod_{\ell \neq k} \mu \caK_{\tau^\ell} \caK_{F_\ell}(a_\ell)  
        \DX{\fra{}} \nonumber \\
        %THIS IS WHERE WE LOSE THE INFORMATION PROVIDED BY THE OTHER QUERIES
        &\kla \sum_{k=1}^{m} \frac{n_k}{n} \sum_{\tau \in \frC_{j,k}} \caT_{j,k}(\tau) \Max{S \subseteq A^m} \int_{S}  \left(  \mujv \caK_{\tau^k} \caK_{F_k}(a_k)  -  e^\ee  \mujw \caK_{\tau^k} \caK_{F_k}(a_k) \right)    \; \prod_{\ell \neq k} \mu \caK_{\tau^\ell} \caK_{F_\ell}(a_\ell)  
        \DX{\fra{}} 
        %\label{equation:nonadaptive:spc:2}
    \end{align}}
    Using Fubini's theorem to split the integral over $S = S_1 \times \dots
    \times S_m$ in $m$ integrals with respect to $a_1, \dots, a_m$. This allows
    to extract all densities independent of the sensitive entry and
    then to determine their integral using that the densities are positive
    functions and as such their maximizing integral is over the set $A$. 
\KURZ{
         \begin{IEEEeqnarray*}{rCl}
        &&\Max{S \subseteq A^m } \int_{S} \left( \mujv \caK_{\tau^k} 
    \caK_{F_k}  -  e^\ee  \mujw \caK_{\tau^k} \caK_{F_k} \right)(a_k) \\
        && \hspace*{10ex} \cdot \prod_{\ell \neq k} \mu 
        \caK_{\tau^\ell} \caK_{F_\ell}(a_\ell) \DX{\fra{}} \\
        &=& \Max{S \subseteq A^m } \int_{S_1} \dots \int_{S_m} \left( \mujv \caK_{\tau^k} \caK_{F_k}
        -  e^\ee  \mujw \caK_{\tau^k} \caK_{F_k} \right)(a_k) \\
        && \hspace*{10ex} \cdot \prod_{\ell \neq k} \mu \caK_{\tau^\ell} 
        \caK_{F_\ell}(a_\ell) \DX{a_1} \dots \DX{a_m} \\
        &=& \Max{S \subseteq A^m } \int_{S_k} \left(  \mujv \caK_{\tau^k}
        \caK_{F_k} -  
        e^\ee  \mujw \caK_{\tau^k} \caK_{F_k}\right)(a_k)  \DX{a_k} \\
        && \hspace*{10ex} \cdot\prod_{\ell \neq k} 
        \underbrace{\int_A \mu \caK_{\tau^\ell} \caK_{F_\ell}(a_\ell) \DX{a_\ell} }_{=1} \\
        &=& \DD^{F_k \circ \SAMP_{\tau^k}}_{\Djv, \Djw}(\ee)
     \end{IEEEeqnarray*}}{    \begin{align*}
        &\Max{S \subseteq A^m } \int_{S} \left( \mujv \caK_{\tau^k} 
        \caK_{F_k}(a_k)  -  e^\ee  \mujw \caK_{\tau^k} \caK_{F_k}(a_k) \right) \; \prod_{\ell \neq k} \mu 
        \caK_{\tau^\ell} \caK_{F_\ell}(a_\ell) \DX{\fra{}} \\
        &\gla \Max{S \subseteq A^m } \int_{S_1} \dots \int_{S_m} \left( \mujv \caK_{\tau^k} \caK_{F_k}(a_k)  
        -  e^\ee  \mujw \caK_{\tau^k} \caK_{F_k}(a_k) \right) \; \prod_{\ell \neq k} \mu \caK_{\tau^\ell} 
        \caK_{F_\ell}(a_\ell) \DX{a_1} \dots \DX{a_m} \\
        &\gla \Max{S \subseteq A^m } \int_{S_k}  \mujv \caK_{\tau^k} \caK_{F_k}(a_k)  -  
        e^\ee  \mujw \caK_{\tau^k} \caK_{F_k}(a_k)  \DX(a_k)  \qd \prod_{\ell \neq k} 
        \underbrace{\int_A \mu \caK_{\tau^\ell} \caK_{F_\ell}(a_\ell) \DX{a_\ell} }_{=1} \\
        &\gla \DD^{F_k}_{\ \SAMP_{\tau^k}(\Djv),\ \SAMP_{\tau^k} (\Djw)}(\ee)
    \end{align*}}
    By combining this result with \eqref{equation:nonadaptive:spc:2} we get the bound:
    \KURZ{     \begin{IEEEeqnarray*}{rCl}
        &&\sum_{k=1}^{m} \frac{n_k}{n} \sum_{\tau \in \frC_{j,k}} \caT_{j,k}(\tau)
    \Max{S \subseteq A^m} \int_{S} \; \prod_{\ell \neq k} \mu \caK_{\tau^\ell}
  \caK_{F_\ell}(a_\ell) \\ 
        && \qd \left(  \mujv \caK_{\tau^k} \caK_{F_k}  -  e^\ee  
        \mujw \caK_{\tau^k} \caK_{F_k} \right)(a_k)
            \DX{\fra{}} \\
        &\gla& \sum_{k=1}^{m} \frac{n_k}{n} \sum_{\tau \in \frC_{j,k}} \caT_{j,k}(\tau) \  
        \DD^{F_k \circ \SAMP_{\tau^k}}_{\Djv,\Djw}(\ee) \\
        &\kla& \sum_{k=1}^{m} \frac{n_k}{n} \ \MSPP^{F_k, \caT^k_+}_{\caD, j} (\ee) \ . 
     \end{IEEEeqnarray*}
}
{    \begin{align*}
        &\sum_{k=1}^{m} \frac{n_k}{n} \sum_{\tau \in \frC_{j,k}} \caT_{j,k}(\tau) \Max{S \subseteq A^m} 
        \int_{S}  \left(  \mujv \caK_{\tau^k} \caK_{F_k}(a_k)  -  e^\ee  
        \mujw \caK_{\tau^k} \caK_{F_k}(a_k) \right)   
         \; \prod_{\ell \neq k} \mu \caK_{\tau^\ell} \caK_{F_\ell}(a_\ell)    \DX{\fra{}} \\
        &\gla \sum_{k=1}^{m} \frac{n_k}{n} \sum_{\tau \in \frC_{j,k}} \caT_{j,k}(\tau) \  
        \DD^{F_k}_{\ \SAMP_{\tau^k}(\Djv),\ \SAMP_{\tau^k} (\Djw)}(\ee) \\
   &  \kla \sum_{k=1}^{m} \frac{n_k}{n} \ \MSPP^{F_k, \caT^k_+}_{\caD, j} (\ee) \ . 
    \end{align*}
    }
\end{IEEEproof}

%When considering a d
For databases with a identical distributions of its entries,
the distribution of input values to answer a query are identical for all partitions 
except at the position where the critical entry is mapped to by the template.
Thus, in this case holds  
$\MSPP^{F_k,  \caT_{j,k}}_{\caD,j}(\ee) \gla   
\hDD^{F_k}_{\caD_{n_k},j}(\ee)$.
%\Rred{\\ $\DD_{F_k, \caD^{n_k}}(\ee) $ --   $\DD$ ist in Def. 2.3 anders definiert}
%atabase where the entires are identically independent distributed 
%the choice of entries for the different partitions does not change
%the densities of the distribution of the answer.

\begin{theorem}\label{theorem:composition:unadaptive:iid}
    Let $\mu$ be an iid distribution for a database $\caD$ of size $n$, $F$ a composition of
    $m$ queries $F_1, \dots, F_m$ and 
    $\T$  the injective $(n_1, \dots, n_m)$--\DRP\ of the entries.
%    $\T$ a $(n_1, \dots, n_m)$--splitting.
    Then $F \circ SAMP_{\T}$ achieves $(\ee, \; \sum_{k=1}^{m} \frac{n_k}{n} \;
\hat{\DD}^{F_k}_{\caD_{n_k},j}(\ee))$--statistical privacy for $\mu$ and $j$.
%    \DD_{F_k, \caD_{n_k}}(\ee) )
\end{theorem}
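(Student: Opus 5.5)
The plan is to derive the theorem directly from Lemma~\ref{lemma:nonadaptive_composition:pure_spc}. That lemma already gives, for an arbitrary product distribution $\mu$, the injective $(n_1,\dots,n_m)$--\DRP\ $\T$ and the critical entry $j$, that $F\circ\SAMP_\T$ is $(\ee,\dd)$--statistically private with
\[ \dd \gla \sum_{k=1}^{m} \frac{n_k}{n}\, \MSPP^{F_k,\caT_{j,k}}_{\caD,j}(\ee). \]
So it suffices to show that in the iid case each term satisfies $\MSPP^{F_k,\caT_{j,k}}_{\caD,j}(\ee) \kla \hDD^{F_k}_{\caD_{n_k},j}(\ee)$, where $\caD_{n_k}$ denotes an iid database of size $n_k$ with the same marginal. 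In fact equality holds, as remarked just before the theorem.

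To prove this identity, I unfold Definition~\ref{definition:mspp}: the sampling privacy curve is the expectation over $\tau\sim\caT_{j,k}$ of $\max_{v,w}\DD^{F_k\circ\SAMP_{\tau^k}}_{\caD_{j,v},\caD_{j,w}}(\ee)$. Only the $k$-th component $\tau^k$ matters here, as seen in the proof of the lemma. Fix a template $\tau$ with $j\in\tau^k$. Because the template is injective, $\SAMP_{\tau^k}(\caD_{j,v})$ is a vector of $n_k$ distinct entries of $\caD$. Since the entries are independent, this vector is distributed as $n_k-1$ independent copies of the common marginal $\omuj$, together with the value $v$ placed at the position $p=\tau^k{}^{-1}(j)$. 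This is exactly $(\caD_{n_k})_{p,v}$, and likewise for $w$.

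Since $F_k$ is symmetric (the paper restricts to symmetric queries), the distribution of $F_k$, and hence of $F_k\circ M$ for any additional mechanism, does not depend on the position $p$. Therefore $\DD^{F_k\circ\SAMP_{\tau^k}}_{\caD_{j,v},\caD_{j,w}}(\ee) = \DD^{F_k}_{(\caD_{n_k})_{p,v},(\caD_{n_k})_{p,w}}(\ee)$, and this value is independent of $\tau$. Taking the maximum over $v,w$ gives $\hDD^{F_k}_{\caD_{n_k},j}(\ee)$ for every $\tau$ in the support of $\caT_{j,k}$, and the expectation of a constant is that constant. Substituting into the lemma yields the claimed $\dd$.

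The main obstacle is the identification of the sampled distribution with an iid database of size $n_k$ with the critical entry fixed. This step needs injectivity, so that no entry is reused and independence is preserved; iid-ness, so that the choice of which noncritical entries are drawn is irrelevant; and symmetry of $F_k$, so that the position of the critical entry within the sample is irrelevant. I would state these three points explicitly.

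A minor point concerns the maximum over $v,w$, which should be taken inside each term. Lemma~\ref{lemma:nonadaptive_composition:pure_spc} fixes $v,w$ and bounds by $\MSPP$, which already contains the maximum. The final bound is therefore uniform in $v,w$, as statistical privacy requires.
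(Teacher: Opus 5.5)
Your proposal is correct and follows the paper's route: the theorem is obtained from Lemma~\ref{lemma:nonadaptive_composition:pure_spc} together with the identity $\MSPP^{F_k,\caT_{j,k}}_{\caD,j}(\ee)=\hDD^{F_k}_{\caD_{n_k},j}(\ee)$, which the paper asserts in the remark just before the theorem without further argument. Your justification of that identity spells out what the paper leaves implicit: injectivity keeps the sampled entries distinct and independent, identical marginals make the choice of noncritical entries irrelevant, and symmetry of $F_k$ makes the position of $j$ inside the sample irrelevant.
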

%\begin{IEEEproof}
%Since the entries are identically distributed it holds 
%\[ \MSPP_{F_k, \caD, \caT^k_+} \gla \DD_{F_k, \caD^{n_k}}(\ee). \]
%Then applying Lemma \ref{lemma:nonadaptive_composition:pure_spc} yields the theorem.
%\end{IEEEproof}

%\Rred{26.9.25. bis hierhin Notation korrigiert}

For nonidentical distributed entries, we obtain the following generalized version of the theorem.

\begin{theorem}{\rm \wfinw{Composition}}\label{theorem:composition:unadaptive:id}
    Let $\mu$ be a distribution for databases of size $n$, $F$ a composition of
    $m$ queries $F_1, \dots, F_m$ and $\T$ is a $(n_1, \dots, n_m)$--\DRP.
    Then $F \circ SAMP_{\T}$ with
\[ \dd \gla \sum_{k=1}^{m} \frac{n_k}{n} \;  \MSPP^{F_k, \caT^k_+}_{\caD, j} (\ee) \]
     achieves $(\ee, \dd)$--statistical privacy for $\mu$ and $j$.
\end{theorem}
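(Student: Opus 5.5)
This theorem is essentially a restatement of Lemma~\ref{lemma:nonadaptive_composition:pure_spc}, with the conditioned sampling technique written as $\caT^k_+$, i.e.\ $\caT_{j,k}=\caT\mid\frC_{j\mapsto k}$. The plan is therefore to reduce directly to that lemma and then to the characterization of statistical privacy via the maximal privacy curve $\hDD$.

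First, I fix the critical entry $j$ and an arbitrary pair $v,w\in W$. I then decompose the random partition $\caT$ according to which block $k$ receives index $j$. Because $\caT$ is uniform over injective templates of format $(n_1,\dots,n_m)$, symmetry gives $\Pr[j\in\caT^k]=n_k/n$. By the lemma on kernels of sampling techniques, the kernel satisfies $\caK_\caT=\sum_k \frac{n_k}{n}\caK_{\caT_{j,k}}$. Substituting this into the definition of $\DD^{F\circ\SAMP_\caT}_{\caD_{j,v},\caD_{j,w}}(\ee)$ and using the subadditivity of $\max_S\int_S$ under convex combinations gives the first inequality in the proof of Lemma~\ref{lemma:nonadaptive_composition:pure_spc}.

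Next, for each fixed $\tau\in\frC_{j\mapsto k}$, I use injectivity and the independence of the entries to factor the answer density into a product over blocks. The factors with $\ell\neq k$ do not involve entry $j$, so they coincide under $v$ and $w$. By Fubini, these factors integrate to $1$ over $A$. What remains is exactly $\DD^{F_k\circ\SAMP_{\tau^k}}_{\caD_{j,v},\caD_{j,w}}(\ee)\le \hDD^{F_k\circ\SAMP_{\tau^k}}_{\caD,j}(\ee)$. Averaging over $\tau\sim\caT_{j,k}$ then yields $\MSPP^{F_k,\caT^k_+}_{\caD,j}(\ee)$ by Definition~\ref{definition:mspp}.

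Finally, I take the maximum over $v,w$. The resulting bound does not depend on $v,w$, so $\hDD^{F\circ\SAMP_\caT}_{\caD,j}(\ee)\le\dd$, which is equivalent to $(\ee,\dd)$-statistical privacy.

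The main obstacle I expect is the Fubini step, namely passing from $\max_{S\subseteq A^m}$ to product sets. A maximizing $S$ need not be a rectangle, so the argument must be justified differently. I would use the positive-part form of $\DD$ instead. The integrand is $g_k(a_k)\prod_{\ell\ne k}h_\ell(a_\ell)$ with $h_\ell\ge 0$, so its positive part equals $g_k^+(a_k)\prod_{\ell\ne k} h_\ell(a_\ell)$. Integrating this positive part gives $\int g_k^+ = \DD$ exactly, with no restriction to rectangles needed.

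A secondary point is that the index set in the paper's notation is ambiguous. I would state explicitly that $\caT^k_+$ denotes the conditional partition $\caT_{j,k}$, so that the theorem literally coincides with Lemma~\ref{lemma:nonadaptive_composition:pure_spc}.
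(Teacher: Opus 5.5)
Your proposal is correct and follows the paper's route. The paper states the theorem as a direct consequence of Lemma~\ref{lemma:nonadaptive_composition:pure_spc}, whose proof carries out exactly your steps: splitting by the block containing $j$ with weights $n_k/n$, factoring the answer density over the disjoint blocks for fixed $\tau$, integrating out the factors with $\ell\neq k$, and bounding by $\hDD$ before averaging over $\tau$ to get the sampling privacy curve. Your positive-part argument simply makes rigorous the paper's Fubini step, which silently restricts $\max_{S\subseteq A^m}$ to product sets $S_1\times\dots\times S_m$; that restriction is harmless, since the maximizing set (those $a_k$ with $g_k(a_k)>0$, all other coordinates unrestricted) is itself a product set.
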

%\begin{IEEEproof}
%    This is a direct consequence of Lemma \ref{lemma:nonadaptive_composition:pure_spc}.
%\end{IEEEproof}

%\Rred{Hier inhalte zur optimalen wahl von $n_1,\dots,n_m$!}

As a first approach, it may seem sensible to choose
the same size $\hat{n}$ for all partitions but since the privacy curves of the
different subqueries may vary strongly with respect to the database size a
sensible curator may want to allows for some subqueries to run on bigger
databases then others to maximize privacy. Furthermore for the sake of utility
the curator may want to grant any inquirer some control over the size of the
underlying databases to allow them for control over the error of the different
queries. In the following theorem, we examine the case where the curator of the
database, possibly in consultation with the inquirer, has opted for fixed
partition sizes which are known to the inquirer and what this means for privacy
in case of an malicious inquirer.

\subsection{Adaptive Composition}

For adaptive compositions the query can as well be modeled as a single query $F=
(F_1, \dots, F_m):\ W^n \mapsto A^m$. But in this case all subqueries depend on
the answer to the former queries meaning $F_j: A^{j-1} \times W^n \mapsto A$.
For a is a random partition $\T$ of the entries let
$\Tjk $ denote $\T $ conditioned on $j$ being used for the $k$-th query.

\begin{lemma}\label{lemma:adaptive_composition:pure_spc}
Let $\mu$ be a distribution on databases of size $n$, 
$F \gla (F_1, \dots, F_m)$ an adaptive composition of
$m$ queries \ and $\T$ a $(n_1, \dots, n_m)$--\DRP.
Then $F \circ SAMP_{\T}$  achieves $(\ee, \dd)$--statistical privacy for $\mu$ with
 \begin{IEEEeqnarray*}{rCl}
    \dd    &=& \sum_{k=1}^{m} \frac{n_k}{n} \sum_{\tau \in \frC_{j,k}}
        \caT_{j,k}(\tau) \ \int_{A^{k-1}}  \DD^{F_k(\fra{}_{k-1}) \circ \SAMP_{\tau^k}}_{\
        \Djv,\Djw}(\ee) \\
        && \prod_{\ell < k} \mu
        \caK_{\tau^\ell} \caK_{F_\ell(\fra{}_{\ell-1})}(a_\ell)
        \DX{\fra{}_{k-1}}  \ .
     \end{IEEEeqnarray*}
%        &\leq& 
%    \[ \dd =  \sum_{k=1}^{m} \frac{n_k}{n} \; \mathbf{E} \left[ \mathbf{E}_{X \sim
%    \mu \caK_{\hat{\caT}^{k-1}} \caK_{(F_1, \dots, F_{k-1})}} \left[
%%    \max_{j \in [1:n]} 
%    \MSPP^{F_k(X),  \bar{\caT}^k_+}_{\caD,,j} (\ee) \right] \right].
%    \]
\end{lemma}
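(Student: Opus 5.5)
The plan follows the proof of Lemma~\ref{lemma:nonadaptive_composition:pure_spc}, with the product structure of the nonadaptive case replaced by a sequential (chain-rule) factorization of the joint answer density.

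\textbf{Step 1: split by the position of $j$.} Fix $v,w\in W$. I decompose $\caK_\caT$ according to which block $k$ contains the critical index $j$. For a uniformly random injective partition this happens with probability $n_k/n$, so
\[
\mujv\caK_\caT\caK_F=\sum_{k}\tfrac{n_k}{n}\,\mujv\caK_{\caT_{j,k}}\caK_F .
\]
The same identity holds for $\mujw$. Taking the maximum over $S\subseteq A^m$ termwise (max of a sum is at most the sum of maxima) gives the analogue of \eqref{equation:nonadaptive:spc:1}. Expanding $\caT_{j,k}$ as a mixture over $\tau\in\frC_{j,k}$ and pulling the maximum inside the mixture gives the analogue of \eqref{equation:nonadaptive:spc:2}.

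\textbf{Step 2: factorize the joint density for a fixed template.} Fix $\tau$. The blocks are disjoint and the entries independent, so conditional on the earlier answers $\fra{}_{\ell-1}$, the $\ell$-th answer depends only on the entries in $\tau^\ell$. Block $\ell$ is independent of blocks $1,\dots,\ell-1$, so the conditioning changes only the query $F_\ell(\fra{}_{\ell-1})$ and not the data distribution. This gives the chain rule
\[
\mu\caK_\tau\caK_F(\fra{})=\prod_{\ell=1}^m \mu\caK_{\tau^\ell}\caK_{F_\ell(\fra{}_{\ell-1})}(a_\ell).
\]
For $\ell\neq k$ the factor is the same under $\mujv$, $\mujw$ and $\mu$, because $j\notin\tau^\ell$. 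Hence the difference $\mujv\caK_\tau\caK_F-e^\ee\mujw\caK_\tau\caK_F$ equals
\[
\Big(\prod_{\ell<k}\mu\caK_{\tau^\ell}\caK_{F_\ell(\fra{}_{\ell-1})}(a_\ell)\Big)\cdot\big(\mujv\caK_{\tau^k}\caK_{F_k(\fra{}_{k-1})}-e^\ee\mujw\caK_{\tau^k}\caK_{F_k(\fra{}_{k-1})}\big)(a_k)\cdot\prod_{\ell>k}\mu\caK_{\tau^\ell}\caK_{F_\ell(\fra{}_{\ell-1})}(a_\ell).
\]

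\textbf{Step 3: bound the integral.} Rather than maximize over product sets $S$, I bound $\int_S$ by the integral of the positive part over all of $A^m$. This is valid because the maximum over $S$ equals the integral of the positive part. The factors are nonnegative, so the positive part of the product is the product of the positive parts. I then integrate in the order $a_m,\dots,a_{k+1},a_k,\fra{}_{k-1}$ using Fubini.
\begin{itemize}
\item For each $\ell>k$, the factor is a probability density in $a_\ell$ for every fixed history, so integrating $a_m$ down to $a_{k+1}$ successively yields $1$.
\item Integrating $a_k$ over the positive part of the middle factor gives exactly $\DD^{F_k(\fra{}_{k-1})\circ\SAMP_{\tau^k}}_{\Djv,\Djw}(\ee)$ for the fixed history $\fra{}_{k-1}$.
\item The remaining integral over $A^{k-1}$ carries the weight $\prod_{\ell<k}\mu\caK_{\tau^\ell}\caK_{F_\ell(\fra{}_{\ell-1})}(a_\ell)$.
\end{itemize}
Summing over $\tau$ with weights $\caT_{j,k}(\tau)$ and over $k$ with weights $n_k/n$ yields exactly the stated $\dd$.

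\textbf{Main obstacle.} The delicate part is Step 2 together with the order of integration in Step 3. In the adaptive case the later factors are conditional densities whose form depends on $a_k$, so they cannot be separated off as independent marginals as in the nonadaptive proof. They must be integrated out innermost first, $a_m$ before $a_{m-1}$ and so on, each being a density for every fixed history. The earlier factors ($\ell<k$) do not depend on $j$, since $j\notin\tau^\ell$, which is why they appear as an averaging measure rather than disappearing. I also need measurability of $\fra{}_{k-1}\mapsto F_k(\fra{}_{k-1})$ so that Fubini applies; I would assume this as part of the model.
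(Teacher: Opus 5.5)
Your proposal is correct and follows the paper's proof essentially step for step: you split by the block $k$ containing $j$ with weights $n_k/n$, pull the maximum inside the mixture over $\tau\in\frC_{j,k}$, factor the joint answer density sequentially for a fixed template, integrate out $a_m,\dots,a_{k+1}$ innermost first, and keep the densities of $a_1,\dots,a_{k-1}$ as the averaging weight over $A^{k-1}$. The only difference is that you bound $\max_S\int_S$ directly by the integral of the positive part over $A^m$. This is cleaner than the paper's route, which implicitly writes $S=S_1\times\dots\times S_m$ and then exchanges $\max$ with $\int_{A^{k-1}}$; both arrive at the same $\delta$.
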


\begin{IEEEproof}
    Let  $v,w \in W$ be arbitrarily chosen $\fra{} = (a_1, \dots,a_m)$.
As in the  nonadaptive proof we first split depending on the position of the sensitive element.
    \begin{IEEEeqnarray}{l}
        \DD^{F \circ \SAMP_\T}_{\Djv, \Djw}(\ee) \nonumber \\
        \kla  \sum_{k=1}^{m} \frac{n_k}{n} \ \Max{S \subseteq A^m} \nonumber \\
       \qd \qd \int_{S}   \mujv \caK_{\caT_{j,k}} \caK_F(\fra{}) -  e^\ee \mujw
        \caK_{\caT_{j,k}} \caK_F(\fra{})   \DX{\fra{}}
        \label{equation:adaptive:spc:1}
    \end{IEEEeqnarray}
    To handle the answers where the sensitive entry is not used one should first
    recognize that because of disjointedness of the subsets and the independence
    of the entries the subqueries are independent for any fixed $\tau$ and it
    holds 
    \[ 
        \mu \caK_{\tau} \caK_{F}(\fra{}) \gla \prod_{\ell=1}^{m} \mu \caK_{\tau^\ell} \caK_{F_\ell(\fra{}_{\ell-1})}(a_\ell). 
    \]
    Let $\fra{}_j = (a_1, \dots,
    a_j)$ for $0 \leq j \leq m$. For $\ell \neq k$ the sensitive element $j$ is not contained in the $l$-th
    subset of the random partition, thus $\mujv \caK_{\tau^\ell}
    \caK_{F_\ell(\fra{}_{\ell-1})}(a_\ell) \gla \mujw \caK_{\tau^\ell} \caK_{F_\ell(\fra{}_{\ell-1})}(a_\ell) \gla
    \mu \caK_{\tau^\ell} \caK_{F_\ell(\fra{}_{\ell-1})}(a_\ell)$. Thus
    \KURZ{ \begin{IEEEeqnarray*}{rCl} 
    &&\mujv \caK_{\caT_{j,k}} \caK_F(\fra{}) -  e^\ee \mujw \caK_{\caT_{j,k}}
    \caK_F(\fra{}) \\
        &\gla& \sum_{\tau \in \frC_{j,k}} \caT_{j,k}(\tau)  \left(\mujv \caK_{\tau} \caK_F(\fra{}) -  e^\ee \mujw \caK_{\tau} \caK_F(\fra{}) \right) \\
        &\gla& \sum_{\tau \in \frC_{j,k}} \caT_{j,k}(\tau)    \; \prod_{\ell
        \neq k} \mu \caK_{\tau^\ell} \caK_{F_\ell(\fra{}_{\ell-1})}(a_\ell)
        \\
        && \hspace*{5ex}\left(  \mujv \caK_{\tau^k} \caK_{F_k(\fra{}_{k-1})}  -  e^\ee
        \mujw \caK_{\tau^k} \caK_{F_k(\fra{}_{k-1})} \right)(a_k)
     \end{IEEEeqnarray*}}{    \begin{align*} 
        &\mujv \caK_{\caT_{j,k}} \caK_F(\fra{}) -  e^\ee \mujw \caK_{\caT_{j,k}} \caK_F(\fra{}) \gla \sum_{\tau \in \frC_{j,k}} \caT_{j,k}(\tau)  \left(\mujv \caK_{\tau} \caK_F(\fra{}) -  e^\ee \mujw \caK_{\tau} \caK_F(\fra{}) \right) \\
        &\gla \sum_{\tau \in \frC_{j,k}} \caT_{j,k}(\tau) \left(  \mujv \caK_{\tau^k} \caK_{F_k(\fra{}_{k-1})}(a_k)  -  e^\ee  \mujw \caK_{\tau^k} \caK_{F_k(\fra{}_{k-1})}(a_k) \right)    \; \prod_{\ell \neq k} \mu \caK_{\tau^\ell} \caK_{F_\ell(\fra{}_{\ell-1})}(a_\ell) 
    \end{align*}}
    \KURZ{    Let 
    \begin{IEEEeqnarray*}{rl}
        \eta_{\tau, w}(a_\ell) &\dea \mujw \caK_{\tau^\ell}
    \caK_{F_\ell(\fra{}_{\ell-1})}(a_\ell)\\
     \eta_{\tau}(a_\ell) &\dea \mu \caK_{\tau^\ell}
    \caK_{F_\ell(\fra{}_{\ell-1})}(a_\ell)
    \end{IEEEeqnarray*} 
    and
    \begin{IEEEeqnarray*}{l}
                B(a_k) := \left(\mujv
    \caK_{\tau^k}
    \caK_{F_k(\fra{}_{k-1})}  -  e^\ee  \mujw \caK_{\tau^k}
    \caK_{F_k(\fra{}_{k-1})}\right)(a_k)
    \end{IEEEeqnarray*}}{}
    Inserting this into the integral in \eqref{equation:adaptive:spc:1} it holds 
    \KURZ{ \begin{IEEEeqnarray}{l}
        \Max{S \subseteq A^m} \int_{S}  \mujv \caK_{\caT_{j,k}} \caK_F(\fra{}) -  e^\ee \mujw \caK_{\caT_{j,k}} \caK_F(\fra{}) \DX{\fra{}} \nonumber \\
        =\Max{S \subseteq A^m} \int_{S}  \sum_{\tau \in \frC_{j,k}} \caT_{j,k}(\tau) B(a_k)  \; \prod_{\ell \neq k} \eta_{\tau}(a_\ell) 
        \DX{\fra{}} \nonumber \\
        \leq \sum_{\tau \in \frC_{j,k}} \caT_{j,k}(\tau) \Max{S \subseteq
        A^m} \int_{S}  B(a_k)  \; \prod_{\ell \neq k} \eta_{\tau}(a_\ell)
        \DX{\fra{}} \nonumber
    \end{IEEEeqnarray}}{    \begin{align}
        &\sum_{i=k}^{m} \frac{n_k}{n} \Max{S \subseteq A^m} \int_{S}  \mujv \caK_{\caT_{j,k}} \caK_F(\fra{}) -  e^\ee \mujw \caK_{\caT_{j,k}} \caK_F(\fra{}) \DX{\fra{}} \nonumber \\
        &\gla  \sum_{k=1}^{m} \frac{n_k}{n} \Max{S \subseteq A^m} \int_{S}  \sum_{\tau \in \frC_{j,k}} \caT_{j,k}(\tau) \left(  \mujv \caK_{\tau^k} \caK_{F_k(\fra{}_{k-1})}(a_k)  -  e^\ee  \mujw \caK_{\tau^k} \caK_{F_k(\fra{}_{k-1})}(a_k) \right)    \; \prod_{\ell \neq k} \mu \caK_{\tau^\ell} \caK_{F_\ell(\fra{}_{\ell-1})}(a_\ell)  
        \DX{\fra{}} \nonumber \\
        &\kla \sum_{k=1}^{m} \frac{n_k}{n} \sum_{\tau \in \frC_{j,k}} \caT_{j,k}(\tau) \Max{S \subseteq A^m} \int_{S}  \left(  \mujv \caK_{\tau^k} \caK_{F_k(\fra{}_{k-1})}(a_k)  -  e^\ee  \mujw \caK_{\tau^k} \caK_{F_k(\fra{}_{k-1})}(a_k) \right)    \; \prod_{\ell \neq k} \mu \caK_{\tau^\ell} \caK_{F_\ell(\fra{}_{\ell-1})}(a_\ell)  
        \DX{\fra{}} 
        %\label{equation:adaptive:spc:2}
    \end{align}}
Again splitting the integral allows us
    to extract all densities independent of the sensitive entry.
  Then we use we use the fact the maximizing integral is over the set whole $A$.
    \KURZ{
         \begin{IEEEeqnarray*}{rCl}
        &&\Max{S \subseteq A^m } \int_{S} \left(\eta_{\tau, v}  -  e^\ee  \eta_{\tau, w} \right)(a_k) \; \prod_{\ell \neq k} \eta_{\tau}(a_\ell) \DX{\fra{}} \\
        &=& \Max{S \subseteq A^m } \int_{S_1} \dotsi \int_{S_m} \left(
        \eta_{\tau, v}  -  e^\ee  \eta_{\tau, w} \right)(a_k) \;
        \prod_{\ell \neq k} \eta_{\tau}(a_\ell) \\
        && \hspace*{5ex} \DX{a_m} \dotso \DX{a_1} \\
        &=& \Max{S \subseteq A^m } \int_{S_1} \dotsi \int_{S_k} \prod_{\ell < k}
        \eta_{\tau}(a_\ell) \; \left( \eta_{\tau, v}  -  e^\ee \eta_{\tau, w} \right)(a_k)  \\
        && \qd \int_{S_{k+1}} \dotsi \int_{S_{m-1}}  \prod_{k < \ell < m}
        \eta_{\tau}(a_\ell) \; \underbrace{\int_A \eta_{\tau}(a_m) \DX{a_m}
        }_{=1} \\
        && \hspace*{7ex}  \DX(a_{m-1}) \dotso
        \DX{a_1}  \\
        && \vdots \\
        &=& \Max{S \subseteq A^k } \int_{S_1} \dotsi \int_{S_k} \prod_{\ell < k}
        \eta_{\tau}(a_\ell) \; \left( \eta_{\tau, v}  -  e^\ee  \eta_{\tau,
        w} \right)(a_k) \\ && \hspace*{7ex} \DX(a_{k}) \dotso \DX{a_1}
     \end{IEEEeqnarray*}
    }{\begin{align*}
        &\Max{S \subseteq A^m } \int_{S} \left( \mujv \caK_{\tau^k} \caK_{F_k(\fra{}_{k-1})}(a_k)  -  e^\ee  \mujw \caK_{\tau^k} \caK_{F_k(\fra{}_{k-1})}(a_k) \right) \; \prod_{\ell \neq k} \mu \caK_{\tau^\ell} \caK_{F_\ell(\fra{}_{\ell-1})}(a_\ell) \DX{\fra{}} \\
        &\gla \Max{S \subseteq A^m } \int_{S_1} \dots \int_{S_m} \left( \mujv \caK_{\tau^k} \caK_{F_k(\fra{}_{k-1})}(a_k)  -  e^\ee  \mujw \caK_{\tau^k} \caK_{F_k(\fra{}_{k-1})}(a_k) \right) \; \prod_{\ell \neq k} \mu \caK_{\tau^\ell} \caK_{F_\ell(\fra{}_{\ell-1})}(a_\ell) \DX{a_m} \dots \DX{a_1} \\
        &\gla \Max{S \subseteq A^m } \int_{S_1} \dots \int_{S_k} \prod_{\ell < k} \left( \mu \caK_{\tau^\ell} \caK_{F_\ell(\fra{}_{\ell-1})}(a_\ell) \right) \; \mujv \caK_{\tau^k} \caK_{F_k(\fra{}_{k-1})}(a_k)  -  e^\ee  \mujw \caK_{\tau^k} \caK_{F_k(\fra{}_{k-1})}(a_k) \\
        &\qd \qd \int_{S_{k+1}} \dots \int_{S_{m-1}}  \prod_{k < \ell < m} \left( \mu \caK_{\tau^\ell} \caK_{F_\ell(\fra{}_{\ell-1})}(a_\ell) \right) \underbrace{\int_A \mu \caK_{\tau^m} \caK_{F_m(\fra{}_{m-1})}(a_m) \DX{a_m} }_{=1}  \DX(a_{m-1}) \dots \DX{a_1}  \\
        &\qd \vdots \\
        &\gla \Max{S \subseteq A^k } \int_{S_1} \dots \int_{S_k} \prod_{\ell < k} \left( \mu \caK_{\tau^\ell} \caK_{F_\ell(\fra{}_{\ell-1})}(a_\ell) \right) \; \mujv \caK_{\tau^k} \caK_{F_k(\fra{}_{k-1})}(a_k)  -  e^\ee  \mujw \caK_{\tau^k} \caK_{F_k(\fra{}_{k-1})}(a_k) \DX(a_{k}) \dots \DX{a_1}
    \end{align*}}
%    Since the queries previous to the one using the sensitive entry only
%    influence the previous term by their density the maximizing integral is over      $A$. 
%\Rred{
%    Using Fubini an additional time}
%and 
Using $\eta(a) \dea \prod_{\ell < k} \eta_{\tau}(a_\ell)$ the last term becomes
    \KURZ{     \begin{IEEEeqnarray*}{rCl}
    &&\Max{S \subseteq A } \int_{A^{k-1}} \int_{S_k}  \left( \eta_{\tau, v}   -
    e^\ee  \eta_{\tau, w}  \right)(a_k) \DX(a_{k}) \; \eta(a) \DX{\fra{}_{k-1}} \\
        &&\leq  \int_{A^{k-1}} \Max{S \subseteq A } \int_{S_k}  \left(
        \eta_{\tau, v}  -  e^\ee  \eta_{\tau, w} \right)(a_k) \DX(a_{k})
       \; \eta(a) \DX{\fra{}_{k-1}} \\
        &&\leq  \int_{A^{k-1}}     \DD^{F_k(\fra{}_{k-1})\circ
        \SAMP_{\tau^k}}_{\Djv, \Djw}(\ee) \; \eta(a) \DX{\fra{}_{k-1}} 
     \end{IEEEeqnarray*}}{ \begin{align*}
        &\Max{S \subseteq A } \int_{A^{k-1}} \int_{S_k}  \left( \mujv \caK_{\tau^k} \caK_{F_k(\fra{}_{k-1})}(a_k)  -  e^\ee  \mujw \caK_{\tau^k} \caK_{F_k(\fra{}_{k-1})}(a_k) \right) \DX(a_{k})   \prod_{\ell < k} \mu \caK_{\tau^\ell} \caK_{F_\ell(\fra{}_{\ell-1})}(a_\ell) \DX{\fra{}_{k-1}} \\
        &\kla  \int_{A^{k-1}} \Max{S \subseteq A } \int_{S_k}  \left( \mujv \caK_{\tau^k} \caK_{F_k(\fra{}_{k-1})}(a_k)  -  e^\ee  \mujw \caK_{\tau^k} \caK_{F_k(\fra{}_{k-1})}(a_k) \right) \DX(a_{k})   \prod_{\ell < k} \mu \caK_{\tau^\ell} \caK_{F_\ell(\fra{}_{\ell-1})}(a_\ell) \DX{\fra{}_{k-1}} \\
        &\kla  \int_{A^{k-1}}  \DD^{F_k(\fra{}_{k-1})}_{\ \SAMP_{\tau^k}(\Djv),\ \SAMP_{\tau^k} (\Djw)}(\ee) \prod_{\ell < k} \mu \caK_{\tau^\ell} \caK_{F_\ell(\fra{}_{\ell-1})}(a_\ell) \DX{\fra{}_{k-1}} 
    \end{align*}}
%    Let $\bar{\caT}^k = \caT^k \mid \hat{\caT}^{k-1}$. Thus bringing all  together it holds.
Finally,
     \begin{IEEEeqnarray*}{rCl}
        && \sum_{k=1}^{m} \frac{n_k}{n} \sum_{\tau \in \frC_{j,k}}
        \caT_{j,k}(\tau) \Max{S \subseteq A^m} \int_{S} \\
        &&  \left(  \mujv
        \caK_{\tau^k} \caK_{F_k(\fra{}_{k-1})}  -  e^\ee  \mujw
        \caK_{\tau^k} \caK_{F_k(\fra{}_{k-1})} \right)(a_k)  \\
        && \prod_{\ell \neq k} \mu \caK_{\tau^\ell} \caK_{F_\ell(\fra{}_{\ell-1})}(a_\ell)  
        \DX{\fra{}} \\
        &\leq& \sum_{k=1}^{m} \frac{n_k}{n} \sum_{\tau \in \frC_{j,k}}
        \caT_{j,k}(\tau) \ \int_{A^{k-1}}  \DD^{F_k(\fra{}_{k-1})\circ \SAMP_{\tau^k} }_{\
        \Djv,\Djw}(\ee) \\
        && \prod_{\ell < k} \mu
        \caK_{\tau^\ell} \caK_{F_\ell(\fra{}_{\ell-1})}(a_\ell)
        \DX{\fra{}_{k-1}}  \ .
     \end{IEEEeqnarray*}
\end{IEEEproof}

This implies

\begin{theorem}\label{theorem:composition:adaptive:iid}
Let $\mu$ be a distribution for an iid databases of size $n$, $F$ an adaptive composition of
  $m$ queries $F_1, \dots, F_m$ and $\T$  a corresponding $(n_1, \dots, n_m)$--\DRP.
%  with    $N_{k} = \sum^{k}_{h=1} n_h$.
Then $F \circ SAMP_{\T}$ achieves 
$(\ee,  \; \sum_{k=1}^{m} \frac{n_k}{n} \; \mathbf{E}_{X \sim (F_1, \dots, F_{k-1})}
    %(\caD^{N_{k-1}})} 
\left[ \DD^{F_k(X)}_{\caD^{n_k}}(\ee)  \right] )$--statistical privacy for $\mu$.
\end{theorem}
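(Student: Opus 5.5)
The plan is to derive the theorem directly from Lemma~\ref{lemma:adaptive_composition:pure_spc}. The iid assumption lets us remove the inner average over templates and read the remaining integral as an expectation over the earlier answers. We start from the bound of the lemma,
\[
\dd \;=\; \sum_{k=1}^{m} \frac{n_k}{n} \sum_{\tau \in \frC_{j,k}} \caT_{j,k}(\tau) \int_{A^{k-1}} \DD^{F_k(\fra{}_{k-1}) \circ \SAMP_{\tau^k}}_{\Djv,\Djw}(\ee) \prod_{\ell < k} \mu \caK_{\tau^\ell} \caK_{F_\ell(\fra{}_{\ell-1})}(a_\ell) \DX{\fra{}_{k-1}} ,
\]
and simplify it for iid $\mu$ for a fixed $k$ and fixed $v,w$.

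The first step is the same observation already used for Theorem~\ref{theorem:composition:unadaptive:iid}. For an injective template $\tau \in \frC_{j,k}$, the sample $\SAMP_{\tau^k}(\Djv)$ is a database of size $n_k$ whose entries are independent. One of them is fixed to $v$, and the others are distributed according to the common marginal. Since $F_k(\fra{}_{k-1})$ is symmetric, the position of the critical entry inside the sample is irrelevant. Hence $\DD^{F_k(\fra{}_{k-1}) \circ \SAMP_{\tau^k}}_{\Djv,\Djw}(\ee) = \DD^{F_k(\fra{}_{k-1})}_{\caD^{n_k}_{j,v},\caD^{n_k}_{j,w}}(\ee)$, and this value does not depend on $\tau$.

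Similarly, for $\ell < k$ each block $\tau^\ell$ avoids $j$ and consists of $n_\ell$ distinct iid entries. Therefore the density $\mu \caK_{\tau^\ell} \caK_{F_\ell(\fra{}_{\ell-1})}(a_\ell)$ equals the density of $F_\ell(\fra{}_{\ell-1})$ on a fresh iid database of size $n_\ell$, again independent of $\tau$. The product $\prod_{\ell<k}$ is then exactly the joint density of the answer vector $X=(a_1,\dots,a_{k-1})$ produced by running $F_1,\dots,F_{k-1}$ adaptively on disjoint iid subdatabases. This is the meaning of $X \sim (F_1,\dots,F_{k-1})$ in the statement.

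Once both factors are seen to be independent of $\tau$, the sum $\sum_{\tau} \caT_{j,k}(\tau)$ collapses to $1$. The integral over $A^{k-1}$ becomes $\mathbf{E}_{X}\bigl[\DD^{F_k(X)}_{\caD^{n_k}_{j,v},\caD^{n_k}_{j,w}}(\ee)\bigr]$. Finally, we bound the curve for the fixed pair $v,w$ by its maximum over all pairs, which is what $\DD^{F_k(X)}_{\caD^{n_k}}(\ee)$ denotes, in the sense of $\hDD$. Because the resulting bound no longer depends on $v,w$, it holds for all pairs, which gives $(\ee,\dd)$-statistical privacy. The same iid argument makes the bound independent of $j$, so it holds for every entry.

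The main obstacle is justifying that the prior-answer density does not depend on $\tau$ or on which entries were used. Conditioning on $j \in \tau^k$ changes the pool from which the blocks $\tau^\ell$, $\ell<k$, are drawn. Under iid this does not matter, but it must be argued carefully that disjointness and injectivity give exact product independence, even though the adaptively chosen $F_\ell$ depends on earlier answers. I would handle this by induction on $\ell$, conditioning on $\fra{}_{\ell-1}$.

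A second subtle point is pulling $\max_{v,w}$ inside the expectation. This step is only an upper bound, which suffices here. Alternatively, the theorem's notation can be read as already including this maximum.
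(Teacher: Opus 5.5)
Your proposal is correct and follows the paper's route. The paper obtains this theorem by specializing Lemma~\ref{lemma:adaptive_composition:pure_spc} to iid entries: both the curve for the $k$-th block and the density of the earlier answers then no longer depend on the template $\tau$. The average over $\tau$ therefore collapses, the integral over $A^{k-1}$ becomes the expectation over $X$, and your final step of bounding by the maximum over $v,w$ turns this into the stated $(\ee,\dd)$ guarantee.
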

%\Rred{
%\begin{IEEEproof}
%    This is a result of Lemma \ref{lemma:adaptive_composition:pure_spc} and the
%    iid distribution where all entries of the databases are of the same
%    distribution such that the sampling does not affect the database
%    distribution. Thus the queries previous to the one which includes the
%    sensitive element are asked on a Database $\caD^{n^*}$ where $n^*$ is the
%    corresponding database sample size. 
%\end{IEEEproof}
%Omitting the identical distribution, we obtain the following generalized version of the theorem.
%}

If entries have different distributions an attacker can try to learn how these
distributions are partitioned among the queries.
Therefore, the choice of a query depends on 
$\prod_{\ell < k} \mu \caK_{\tau^\ell} \caK_{F_\ell(\fra{}_{\ell-1})}(a_\ell)$
which bounds the information he can get.
Thus we get

\begin{theorem}{\rm \wfinw{Composition}}\label{theorem:composition:adaptive:id}
    Let $\mu$ be a distribution for databases of size $n$, $F$ an adaptive composition of
    $m$ queries $F_1, \dots, F_m$ and $\T$ is a $(n_1, \dots, n_m)$--\DRP.
    Then $F \circ SAMP_{\T}$   achieves $(\ee, \dd)$--statistical privacy for 
\KURZ{
 \begin{IEEEeqnarray*}{rCl}
        \dd &=& \sum_{k=1}^{m} \frac{n_k}{n} \sum_{\tau \in \Cjk}
        \Tjk(\tau)  \int_{A^{k-1}}  \DD^{F_k(\fra{}_{k-1}) \circ
        \SAMP_{\tau^k}}_{\Djv,\Djw}(\ee) \\
        && \hspace*{3ex} \prod_{\ell < k} \mu \caK_{\tau^\ell} \caK_{F_\ell(\fra{}_{\ell-1})}(a_\ell) \DX{\fra{}_{k-1}} \ .
     \end{IEEEeqnarray*}
}
{    \[ \dd = \sum_{k=1}^{m} \frac{n_k}{n} \sum_{\tau \in \frC_{j,k}}
    \caT_{j,k}(\tau)  \int_{A^{k-1}}  \DD^{F_k(\fra{}_{k-1}) \circ \SAMP_{\tau^k}}_{\Djv,\Djw}(\ee) \prod_{\ell < k} \mu
    \caK_{\tau^\ell} \caK_{F_\ell(\fra{}_{\ell-1})}(a_\ell) \DX{\fra{}_{k-1}}
    \]
}
\end{theorem}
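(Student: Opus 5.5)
The statement of Theorem~\ref{theorem:composition:adaptive:id} has exactly the form of the bound in Lemma~\ref{lemma:adaptive_composition:pure_spc}, so the plan is to derive it from that lemma. The only remaining work is to connect the lemma's bound on a fixed pair $v,w$ to the definition of $(\ee,\dd)$-statistical privacy, which quantifies over all pairs $v,w\in W$ and all $S\subseteq A^m$.

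\textbf{Step 1 (fix a pair and a set).} Fix $v,w\in W$ and a measurable set $S\subseteq A^m$. By the definition of the privacy curve,
\[
\Prob{F\circ\SAMP_\T(\Djv)\in S}{} - e^\ee\,\Prob{F\circ\SAMP_\T(\Djw)\in S}{}
\ \kla\ \DD^{F\circ\SAMP_\T}_{\Djv,\Djw}(\ee).
\]
It therefore suffices to bound the right-hand side by $\dd$.

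\textbf{Step 2 (invoke the lemma).} Inequality \eqref{equation:adaptive:spc:1} and the chain of estimates in the proof of Lemma~\ref{lemma:adaptive_composition:pure_spc} give exactly this bound. The chain proceeds in four moves:
\begin{enumerate}
\item condition on the block $k$ containing $j$, which carries weight $n_k/n$ because the partition is uniform over injective templates of the fixed format;
\item factorize the density over the disjoint blocks for each fixed $\tau$;
\item integrate out the blocks $\ell>k$, whose densities each integrate to $1$;
\item move the maximum over $S$ inside the integral over $\fra{}_{k-1}$, which yields the privacy curve of $F_k(\fra{}_{k-1})\circ\SAMP_{\tau^k}$ weighted by the density $\prod_{\ell<k}\mu\caK_{\tau^\ell}\caK_{F_\ell(\fra{}_{\ell-1})}(a_\ell)$ of the earlier answers.
\end{enumerate}
Writing $\Tjk$ and $\Cjk$ for the $\caT_{j,k}$ and $\frC_{j,k}$ of the lemma, we obtain precisely the displayed $\dd$.

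\textbf{Step 3 (uniformity in $v,w$).} The stated $\dd$ depends on $v,w$. If the theorem is meant to hold for all pairs, we take $\dd$ as the supremum over $v,w$ of the expression, in the sense of $\hDD$. Since the bound holds pointwise for every pair, taking the maximum preserves it.

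\textbf{Main obstacle.} The delicate step is the conditioning in move 1 and the factorization in move 2 for non-identical marginals. The point to check is that the conditional law of the template given $j\in\tau^k$ is independent of the entry values, so that the decomposition $\mu\caK_\caT=\sum_k \frac{n_k}{n}\,\mu\caK_{\caT_{j,k}}$ holds exactly. This is true because the curator draws $\T$ independently of $\caD$.

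The earlier blocks' densities $\mu\caK_{\tau^\ell}$ depend on $\tau$ when the marginals differ. For this reason the $\tau$-sum must be kept outside the $\fra{}_{k-1}$-integral rather than averaged into a single expectation, and the statement respects this. This ordering also captures the adversary's partial information about the partition noted before the theorem.
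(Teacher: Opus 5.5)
Your proposal is correct and follows the paper's route: the theorem is a restatement of Lemma~\ref{lemma:adaptive_composition:pure_spc}, and your four moves are exactly the steps of that lemma's proof, namely splitting by the block containing $j$ with weight $n_k/n$, factorizing per fixed template, integrating out the later answers, and pulling the maximum over $S$ inside the integral over $a_1,\dots,a_{k-1}$. Your Step~3, taking the maximum over the pair $v,w$ so that a single $\delta$ satisfies the definition of statistical privacy, makes explicit what the paper leaves implicit.
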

%\begin{IEEEproof}
%    This is a direct consequence of Lemma \ref{lemma:adaptive_composition:pure_spc}.
%\end{IEEEproof}
%

\section{Applications}

Good privacy guarantees are useless if the distortion of the data, the utility loss
$\caU$, is large, resp.~the accuracy is small.
$\caU$ is typically be measured by the mean square error (MSE) introduced.
This is demonstrated by the mean estimator where the MSE falls
linearly with the database size. As a general rule for poisson sampling and
sampling without replacement for an iid database
%, which is
%a fundamental assumption when using any estimator since they are used to infer
%the parameters of the underlying database, 
any subsampling will result in an
accuracy loss that stems from the smaller dataset. 
%When
%considering sampling with replacement the influence on the accuracy varies
%strongly depending on the underlying distributions.

To compare different settings and mechanisms an obvious way is to fix two
of the parameters $\ee,\dd, \caU$ to some reasonable values
and then to calculate how the remaining one behaves.

%When discussing the utility of queries and privacy mechanisms a widely accepted
%measure for accuracy is the mean square error (MSE) of the mechanism and query
%combination. Since the accuracy depends of the size of the database subsampling
%can be a problem. 

\begin{example}
    For a property queriy $F$ with property probability $p$
    and sampling without replacement $\Tns$ the  MSE increases by
    \begin{IEEEeqnarray*}{C}
         \frac{p (1-p)}{s} - \frac{p (1-p)}{n}  
   \end{IEEEeqnarray*}
    since the number of positive elements is binomially distributed with
    parameters $n$ respective $s$ and $p$.
\end{example}

The way in which subsampling or resampling
methods effect the accuracy of different estimators is subject of the resampling
field in statistics.
Typical  applications of sampling methods considered here are
Bootstrapping, Cross validation or Jackknife (for an overview of these methods see \cite{G99}).

% ICH HABE MICH HIER VERRANNT MORGEN NOCHMAL DRAUF SCHAUEN 

% \begin{example}
%     To still illustrate this case we consider an estimator $F$ on a database of
%     size $n$ that behaves $\mathcal{N}(\mu, \sigma^2/n)$  then how much noise is
%     still needed to achieve $(\ee = 0.01, \dd = 1/n)$--SP. Further we assume
%     that the sensitivity meaning the biggest possible difference in the mean of
%     the estimator depending on the value of one entry is $\Delta F = 1/n$. Then
%     for $n = 200$ we calculate the additionally needed noise for different
%     sample sizes by numerically approximating the 
% \end{example}

%\subsection{Missing privacy in composition}

Composition has extensively been investigated for DP.
However, large values for $\ee$ are used that have little practical use.
For small $\ee$ on the other side the accuracy often is too low if
noise necessary according to the DP bounds are injected.
Is  the SP scenario suitable to overcome these problems? 
%and bring PET into practice.
% is a very competitive field where every possible way to
%further reduce noise is used and still many use privacy parameters of $\ee > 5$
%or worse and still seem to uphold some real privacy. 
%To illustrate how much of this may be lead back to the uncertainty of an adversary 
To compare SP with DP we have calculated the number of property queries we could
ask in DP compared to the amount of property queries we can ask over an database
with iid distributed queries in SP for the same privacy parameters and accuracy
using a Gaussian Noise mechanism for DP calibrated using the noise formula in
\cite{KOV15}.  

% We did this by fixing $\ee$ for the composed query and calculated
% the $\dd$ parameter for SP. Then we calculated the loss of
% accuracy that occurs for the subqueries in the SP model. Know we used Theorem 15
% in \cite{KOV15} to calculate the number of queries that could be asked for
% $(\ee,\dd)$ such that the Gaussian noise added to the subqueries amounted to the
% same accuracy loss as the queries in the SP model obtained. 

In more detail, first $\ee$ is fixed for a composed property query with \#SP
subqueries and the corresponding $\dd$ parameter according to SP is computed
using Theorem~\ref{theorem:composition:unadaptive:iid} and an idd database of
size $n$. Then the accuracy loss incurred by the subqueries in the SP model is
calculated. Finally, using Theorem~15 in \cite{KOV15} it is estimated how many
queries \#DP can be asked under $(\ee, \dd)$--DP such that the Gaussian noise
added to the subqueries leads to the same accuracy loss as observed in the SP
model. 

% Considering the results of these calculations in the tables
% \ref{table:NR_DP_Querys_Compared_to_SP_Queries,table:NR_DP_Querys_Compared_to_SP_Queries_small_n}
% there is enough internal noise to often ask multiple queries on a partitioned database lowering the
% accuracy then using a Gaussian Noise mechanism to protect the privacy. 
% This illustrates one of the key problems when utilizing DP in real world
% application -- namely that the parameters used often do not fit the privacy
% gained. Here DP would only allow us a few queries under a reasonable low $\ee$
% enforcing the use of high $\ee$ parameters when we can see that much of the
% actual privacy is generated by internal noise. 

\begin{table}[!t]
\renewcommand{\arraystretch}{1.3}
\caption{Results when using a database of size $n = 2^{15} = 32768$
and occurrence probability of $p=1/2$ for the property queries.}
\label{table:NR_DP_Querys_Compared_to_SP_Queries}
\centering
\begin{tabular}{ | c | c | c | c | c | c | c | c |}
    \multicolumn{2}{c}{} & \multicolumn{2}{c}{$\ee = 0.005$} &
    \multicolumn{2}{c}{$\ee = 0.01$} &
    \multicolumn{2}{c}{$\ee = 0.02$} \\
    \hline
    \#SP & $\sigma$ & $\dd$ & \#DP & $\dd$ & \#DP & $\dd$ & \#DP  \\
    \hline
    32 & .0153& .0225 & 0  & .0203   &  3 & .0163 & 9\\
    \hline
    64  & .0219 & .0329 &  1  & .0306 &  6 & .0264 & 20  \\
    \hline
    128  & .0311  & .0475 &  3  & .0452 &  12 & .0409 & 44 \\
    \hline
    256  & .0441  & .0682 &  6 & .0660  &  25  & .0617 & 93\\
    \hline
    512  & .0624  & .0973 &  12  & .0953 & 50 & .0912 & 194 \\
    \hline 
\end{tabular}
\end{table}

\begin{table}[!t]
\renewcommand{\arraystretch}{1.3}
\caption{Results when using a database of size $n = 2^{10} = 1024$
and occurrence probability of $p=1/2$ for the property queries.}
\label{table:NR_DP_Querys_Compared_to_SP_Queries_small_n}
\centering
\begin{tabular}{ | c | c | c | c | c | c | c | c |}
    \multicolumn{2}{c}{} & \multicolumn{2}{c}{$\ee = 0.05$} &
    \multicolumn{2}{c}{$\ee = 0.1$} &
    \multicolumn{2}{c}{$\ee = 0.2$} \\
    \hline
    \#SP & $\sigma$ & $\dd$ & \#DP & $\dd$ & \#DP & $\dd$ & \#DP  \\
    \hline
    32 & .0869& .1214 & 2  & .1020   &  7 & .0.0711 & 23\\
    \hline
    64  & .1240 & .1808 &  4  & .1644 &  16 & .1291 & 55  \\
    \hline
    128  & .1760  & .2618 &  9 & .2496 &  35 & .2232 & 126 \\
    \hline
\end{tabular}
\end{table}
\Rblue{
For strong $\ee$ regimes we observe a substantial difference in the number of queries that
can be issued depending on if the internal noise is taken into account (see
Tables
\ref{table:NR_DP_Querys_Compared_to_SP_Queries}, \ref{table:NR_DP_Querys_Compared_to_SP_Queries_small_n}).
Although this effect is less pronounced for more relaxed $\ee$ parameters, it
remains particularly relevant for smaller databases, where even at comparatively
large $\ee = 0.1$ we still see a significant gap.}

\section{Conclusion}
% Resampling methods serve an important role in statistics where privacy is of
% high concern. Since many researchers still reject additive noise mechanisms we
% showed that sampling methods can be applied to amplify privacy provided by
% internal noise to reach acceptable levels. After handling the case of a single
% query we have extended the investigation to the composition of queries. Here we
% provide a method to calculate composed queries in the SP setting with stronger
% bounds that the former ones \cite{BBG11} and developed tools to handel different
% kinds of further background noise or non identically distributed databases
% allowing for an easier implementation of SP for single and compose queries. We
% compared the composition results to DP where we observe that the internal noise
% is often sufficient to allow multiple queries on a partitioned database while
% still degrading accuracy less than a Gaussian noise mechanism would need to to
% ask the same amount of queries. This highlights a central issue in applying DP
% in real-world scenarios: the chosen parameters frequently do not reflect the
% actual privacy obtained. Under differential privacy, only a small number of
% queries would be permitted for reasonably low $\ee$, effectively forcing
% practitioners to use high $\ee$ values -- even though a substantial portion of
% the privacy protection is already provided by internal noise.

\Rblue{
In the statistical privacy setting, we have introduced a method for deriving
composition bounds in the SP framework that were strictly sharper than
previously established results \cite{BBG11}. Furthermore, we have developed tools
capable of accommodating additional forms of background noise as well as
non-identically distributed databases, thereby facilitating the practical
deployment of SP for both single and composed queries.}

\Rblue{A comparison with differential privacy (DP) reveals that internal noise is
frequently sufficient to permit multiple queries on a partitioned database while
incurring less accuracy degradation than would be required under a Gaussian
mechanism to support the same number of queries. This observation underscores a
fundamental limitation in the application of DP to real-world scenarios: the
prescribed parameter choices often do not align with the effective privacy
actually achieved in practice. Under DP, only a small number of queries is
admissible for reasonably small values of $\ee$, effectively compelling
practitioners to adopt substantially larger $\ee$ values -- even though a
considerable portion of the resulting privacy protection is already provided by
internal noise.}

\bibliographystyle{IEEEtran}
\bibliography{subsam}

\end{document}